\newlist{abbrv}{itemize}{1}
\setlist[abbrv,1]{label=,labelwidth=1.2in,align=parleft,itemsep=0.1\baselineskip,leftmargin=!}
\newcommand{\abs}[1]{\left| #1 \right|}
\newcommand{ \Hone }{H^1( \mathbb{R}^3)}
\newcommand{\scp}[2]{\big\langle #1 , #2 \big\rangle}
\newcommand{\norm}[1]{\left\| #1 \right\| }
\renewcommand{\Re}{\mathrm{Re}}
\renewcommand{\Im}{\mathrm{Im}}
\newcommand{\be}{\begin{equation}}
\newcommand{\ee}{\end{equation}}
\newtheorem{theorem}{Theorem}[section]
\newtheorem{lemma}{Lemma}[section]
\newtheorem{assumption}{Assumption}[section]
\theoremstyle{definition}
\theoremstyle{remark}
\newtheorem{remark}{Remark}[section]
\begin{document}

 %\listoftodos

\title{The Landau-Pekar equations: Adiabatic theorem and accuracy }

\author{Nikolai Leopold, Simone Rademacher, Benjamin Schlein and Robert Seiringer
}

\maketitle

\begin{abstract}

\noindent

We prove an adiabatic theorem for the Landau-Pekar equations. This allows us to derive new results on the accuracy of their use as effective equations for the time evolution generated by the Fr\"ohlich Hamiltonian with large coupling constant $\alpha$.
In particular, we show that the time evolution of Pekar product states with coherent phonon field and the electron being trapped by the phonons is well approximated by the Landau-Pekar equations until times short compared to $\alpha^2$.

\end{abstract}

\noindent

\section{Introduction}

We are interested in the dynamics of an electron in a ionic crystal. For situations in which the extension of the electron is much larger than the lattice spacing, Fr\"ohlich \cite{froehlich} derived a model which treats the crystal as a continuous medium and describes the polarization of the lattice as the excitations of a quantum field, called phonons.
If the coupling between the electron and the phonons
is large it is expected that the dynamics of the system can be approximated by the Landau-Pekar equations, a set of nonlinear differential equations which model the phonons by means of a classical field. The coupling parameter of the Fr\"ohlich model enters into the Landau-Pekar equations leading to a separation of time scales of the electron and the phonon field. This phenomenon, often referred to as adiabatic decoupling \cite{teufel}, is believed to be responsible for the classical behavior of the radiation field. The physical picture one has in mind is that the electron is trapped in a cloud of slower phonons which increase the effective mass of the electron \cite{liebseiringer}.

The goal of this paper is to compare the time evolution generated by the Fr\"ohlich Hamiltonian with the Landau-Pekar equations and to give a quantitative justification of the applied approximation. In particular, we will consider the evolution of factorized initial data, with a coherent phonon field and an electron trapped by the phonons and minimizing the corresponding energy. For such initial data, we show that the Landau-Pekar equations provide a good approximation of the dynamics, up to times short compared to $\alpha^2$, with $\alpha$ denoting the coupling between the electron and phonons (space and time units are chosen so that the electron is initially localized in a volume of order one and has speed of order one). This result improves previous bounds in \cite{frankschlein, frankgang}, which only hold up to times of order $\alpha$ (but for more general initial data). Also, it extends the findings of \cite{griesemer}, which show a result similar to ours but only for initial data minimizing the Pekar energy functional (in this case, the solution of the Landau-Pekar equations remains constant). To prove our bound, we establish an adiabatic theorem for the solution of the Landau-Pekar equations. The idea of considering states with the electron trapped by the phonon field and showing an adiabatic theorem was first proposed in \cite{frankgang2,frankgang3}, where an adiabatic theorem is proved for a one-dimensional version of the Landau-Pekar equations. Apart from the restriction to the one-dimensional setting, the adiabatic theorem in \cite{frankgang2,frankgang3} differs from ours, because it compares the full solution of the Landau-Pekar equations with the solution of a limiting system of equations, independent of $\alpha$ (in Theorem \ref{thm:adiabatic} below, on the other hand, we only compare the electron wave function with the ground state of the Schr\"odinger operator associated with the phonon field determined by the Landau-Pekar equations; this is sufficient for our purposes).

\section{Model and Results}

We consider the  Fr\"ohlich model which describes the interaction between an electron and a quantized phonon field. The state of the phonon field is represented by an element of the bosonic Fock space $\mathcal{F} \coloneqq \bigoplus_{n \geq 0} L^2(\mathbb{R}^3)^{\otimes_s^n}$, where the subscript $s$ indicates symmetry under the interchange of variables.
The system is described by elements $\Psi_t \in \mathcal{H}$ of the Hilbert space 
\begin{align}
\mathcal{H} \coloneqq L^2(\mathbb{R}^3) \otimes \mathcal{F}.
\end{align}
Its time evolution is governed by the Schr\"odinger equation
\begin{align}
\label{eq: Schroedinger equation}
i \partial_t \Psi_t = H_{\alpha} \Psi_t
\end{align}
with the Fr\"ohlich Hamiltonian
\begin{align}
\label{eq: Froehlich Hamiltonian}
H_{\alpha} \coloneqq  - \Delta + \int d^3k \, \abs{k}^{-1}  \big( e^{ik \cdot x} a_k + e^{-ik \cdot x}  a_k^*  \big) + \int d^3k \, a_k^* a_k .
\end{align}
Here,  $a_k^*$ and $a_k$ are the creation and annihilation operators in the Fock space $\mathcal{F}$, satisfying the commutation relations
\begin{align}
[a_k, a_{k'}^*] &= \alpha^{-2} \delta(k - k') ,
\quad 
[a_k, a_{k'}] = [a_k^*, a_{k'}^*] = 0
\quad \text{for all} \, k, k' \in \mathbb{R}^3,
\end{align}
for a coupling constant $\alpha >0$. One should note that the Hamiltonian is written in the strong coupling units, which gives rise to the $\alpha$ dependence in the commutation relations. These units are related to the usual ones by rescaling all lengths by $\alpha$ and time by $\alpha^2$, see \cite[Appendix A]{frankschlein}.
We will be interested in the limit $\alpha \rightarrow \infty$. 
Motivated by Pekar's Ansatz, we consider the evolution of initial states of product form
\begin{align}
\label{eq:product}
\psi_0 \otimes W(\alpha^2 \varphi_0) \Omega.
\end{align}
Here $\Omega$ denotes the vacuum of the Fock space $\mathcal{F}$ and $W(f)$ for $f \in L^2 ( \mathbb{R}^3)$ denotes the Weyl operator given by
\begin{align}
\label{eq: definition Weyl operator}
W(f) = \exp \left[ \int d^3k \, \left( f(k) a_k^* - \overline{f(k)} a_k  \right)  \right]  .
\end{align}
Note that the Weyl operator is unitary and satisfies the shifting property with respect to the creation and annihilation operator, i.e.
\begin{align}
\label{eq:Weyl_comm}
W^*(f)a_k W(f) = a_k + \alpha^{-2} f(k), \quad W^*(f)a^*_k W(f) = a^*_k + \alpha^{-2} \overline{f(k)}
\end{align}
for all $f \in L^2( \mathbb{R}^3)$.
Due to the interaction the system will develop correlations between the electron and the radiation field and the solution of \eqref{eq: Schroedinger equation} will no longer be of product form. However, for an appropriate class of initial states we will show that it can be approximated up to times short compared to $\alpha^2$ (in the limit of large $\alpha$) by a product state $\psi_t \otimes W(\alpha^2 \varphi_t) \Omega$, with $(\psi_t,\varphi_t)$ being a solution of the Landau-Pekar equations
\begin{align}
\label{eq: Landau Pekar equations}
\begin{cases}
i \partial_t \psi_t &= \left[ - \Delta + \int d^3k \, \abs{k}^{-1} \left( e^{ ik \cdot x} \varphi_t(k) + e^{- ik \cdot x} \overline{\varphi_t(k)}  \right) \right] \psi_t(x) , \\
i \alpha^2 \partial_t \varphi_t(k) &= \varphi_t(k) + \abs{k}^{-1} \int d^3x \, e^{- ik \cdot x} \abs{\psi_t(x)}^2 
\end{cases}
\end{align}
with initial data $(\psi_0,\varphi_0)$. We define for $\varphi \in L^2(\mathbb{R}^3)$  the potential 
\begin{align}
\label{eq: definition potential}
 \; V_{\varphi} (x) = \int d^3k \, |k|^{-1}  \left( \varphi (k) e^{ik \cdot x} + \overline{\varphi (k)} e^{-ik \cdot x} \right) .
\end{align}
We are interested, in particular, in initial data of the form \eqref{eq:product} where the phonon field $\varphi$ is such that the Schr\"odinger operator
\begin{align}
\label{eq: hphit}
h_{\varphi} &\coloneqq - \Delta + V_{\varphi}
\end{align}
has a non-degenerate eigenvalue at the bottom of its spectrum separated from the rest of the spectrum by a gap, and the electron wave function $\psi$ is a ground state vector of \eqref{eq: hphit}.

%%%%%%%%%%%%%%%%%%%%%%%%%%%%%%%

\begin{assumption}
\label{assumptions}
Let $\varphi_0 \in L^2(\mathbb{R}^3)$ such that
\begin{align}
e(\varphi_0) \coloneqq  \inf \lbrace\langle \psi, h_{\varphi_0} \psi \rangle : \psi \in H^1( \mathbb{R}^3), \| \psi \|_2 =1 \rbrace <0 .
\end{align}
\end{assumption}
%%%%%%%%%%%%%%%%%%%%%%%%%%%%
This assumption ensures the existence of a unique positive ground state vector $\psi_{\varphi_0}$ for $h_{\varphi_0}$ with corresponding eigenvalue separated from the rest of the spectrum by a gap of size $\Lambda (0) >0$.  If we then consider solutions of \eqref{eq: Landau Pekar equations} with initial data $(\psi_{\varphi_0},\varphi_0)$  the spectral gap can be shown to persist at least for times of order $\alpha^2$. 
 \begin{lemma}
\label{lemma: spectral gap}
Let $\varphi_0$ satisfy Assumption \ref{assumptions} and let $( \psi_t, \varphi_t) \in H^1( \mathbb{R}^3) \times L^2( \mathbb{R}^3)$ denote the solution of the Landau-Pekar equations with initial value $( \psi_{\varphi_0}, \varphi_0 )$. Moreover, let 
\begin{align}
\Lambda(t) := \inf_{\substack{\lambda \in \mathrm{spec}( h_{\varphi_t})\\ \lambda \not= e(\varphi_t)}} \vert e(\varphi_t) - \lambda \vert.
\end{align}
Then, for all $\Lambda$ with $0 < \Lambda < \Lambda(0) $ there is a constant $C_{\Lambda} >0$ such that, for all $|t| \leq C_\Lambda \alpha^2$, the Hamiltonian $h_{\varphi_t}$ has a  unique positive and normalized ground state $\psi_{\varphi_t}$ with eigenvalue $e( \varphi_t)<0$, which is separated from the rest of the spectrum by a gap  of size $\Lambda (t) \geq \Lambda$.
\end{lemma}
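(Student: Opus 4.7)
The strategy is to control the motion of $\varphi_t$ in $L^2(\mathbb{R}^3)$ on the adiabatic timescale $\alpha^2$ and then transfer this to a spectral-continuity statement for $h_{\varphi_t}$. The Landau--Pekar equation for the phonon field,
\begin{equation*}
i\alpha^2 \partial_t \varphi_t(k) = \varphi_t(k) + |k|^{-1}\int d^3x\, e^{-ik\cdot x}|\psi_t(x)|^2,
\end{equation*}
makes the scale $\alpha^2$ manifest on the left-hand side, so once the right-hand side is shown to be bounded in $L^2(\mathbb{R}^3)$ uniformly in time, integration immediately yields $\|\varphi_t-\varphi_0\|_2 \leq C|t|/\alpha^2$.

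First I would establish global-in-time a priori bounds on $\|\psi_t\|_{H^1}$ and $\|\varphi_t\|_2$. Using $\|\psi_t\|_2 = 1$ together with conservation of the LP energy $\mathcal{E}(\psi_t,\varphi_t) = \|\nabla\psi_t\|_2^2 + \langle\psi_t, V_{\varphi_t}\psi_t\rangle + \|\varphi_t\|_2^2$, completing the square in $\varphi$ rewrites the energy as $\mathcal{F}^{\mathrm{P}}(\psi_t) + \|\varphi_t + |k|^{-1}\widehat{|\psi_t|^2}\|_2^2$, where the Pekar functional $\mathcal{F}^{\mathrm{P}}(\psi) = \|\nabla\psi\|_2^2 - \||k|^{-1}\widehat{|\psi|^2}\|_2^2$ is bounded below by a universal constant on the unit sphere of $L^2$. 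Combined with the Hardy--Littlewood--Sobolev and Sobolev interpolation estimate $\||k|^{-1}\widehat{|\psi|^2}\|_2 \leq C\|\psi\|_2^{3/2}\|\nabla\psi\|_2^{1/2}$, this yields $\|\psi_t\|_{H^1} + \|\varphi_t\|_2 \leq C$ depending only on the initial data. Plugging back into the phonon equation gives $\|\partial_t\varphi_t\|_2 \leq C\alpha^{-2}$, and hence $\|\varphi_t-\varphi_0\|_2 \leq C|t|/\alpha^2$.

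Next I would prove that the spectrum of $h_\varphi$ depends continuously on $\varphi \in L^2$. Writing $h_{\varphi_t} = h_{\varphi_0} + V_{\varphi_t-\varphi_0}$, the same Cauchy--Schwarz plus HLS/Sobolev computation provides the infinitesimal form bound
\begin{equation*}
|\langle \psi, V_\chi \psi\rangle| \leq \varepsilon \|\nabla\psi\|_2^2 + C_\varepsilon \|\chi\|_2^2 \|\psi\|_2^2,
\end{equation*}
uniformly in $\chi \in L^2$. The min--max principle, applied to both $h_{\varphi_t}$ and $h_{\varphi_0}$, then implies that every eigenvalue of $h_{\varphi_t}$ below its essential spectrum, as well as the bottom of the essential spectrum, lies within $O(\|\varphi_t - \varphi_0\|_2)$ of the corresponding quantity for $h_{\varphi_0}$. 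Choosing $C_\Lambda$ so small that $C \cdot C_\Lambda \leq \delta_\Lambda$, with $\delta_\Lambda$ a threshold below which a perturbation of this $L^2$-size decreases the gap by less than $\Lambda(0)-\Lambda$ and keeps $e(\cdot)$ strictly negative, one concludes that for $|t|\leq C_\Lambda \alpha^2$ the operator $h_{\varphi_t}$ has an isolated eigenvalue $e(\varphi_t)<0$ at the bottom of its spectrum, separated by a gap of size at least $\Lambda$. Uniqueness, positivity and $H^1$-regularity of the associated ground state $\psi_{\varphi_t}$ then follow from the standard Perron--Frobenius argument for Schr\"odinger operators with a real-valued, form-bounded potential.

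The main obstacle is the uniform-in-$t$ execution of the spectral-continuity step: since $V_\varphi$ is an unbounded (though infinitesimally form-bounded) perturbation of $-\Delta$, the min--max comparison must be carried out with form bounds that do not deteriorate as $\varphi_t$ evolves. The uniform a priori bound on $\|\varphi_t\|_2$ obtained from energy conservation makes this manageable, but keeping careful track of the constants so that the gap is reduced by strictly less than $\Lambda(0)-\Lambda$ is the technical heart of the argument.
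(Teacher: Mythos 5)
Your proposal follows essentially the same route as the paper: a priori bounds on $\|\psi_t\|_{H^1}$ and $\|\varphi_t\|_2$ from energy conservation, the Hardy--Littlewood--Sobolev form bound on $V_\chi$, and the min--max principle to track both $e(\varphi_t)$ and the first excited level (or the bottom of the essential spectrum). The only organizational difference is that you first integrate the phonon equation to get $\|\varphi_t-\varphi_0\|_2\le C|t|\alpha^{-2}$ and then perturb $h_{\varphi_0}$ by $V_{\varphi_t-\varphi_0}$, whereas the paper differentiates $\langle\psi,h_{\varphi_s}\psi\rangle$ in $s$ using $\partial_s V_{\varphi_s}=-\alpha^{-2}V_{i\varphi_s}$ and integrates; these are equivalent. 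One small caveat: the form bound as you state it, $|\langle\psi,V_\chi\psi\rangle|\le\varepsilon\|\nabla\psi\|_2^2+C_\varepsilon\|\chi\|_2^2\|\psi\|_2^2$ with $\varepsilon$ fixed, does not by itself give eigenvalue shifts of order $\|\chi\|_2$, since the $\varepsilon\|\nabla\psi\|_2^2$ loss does not vanish as $\chi\to0$; you need the homogeneous version $\pm V_\chi\le C\|\chi\|_2(-\Delta+1)$ (i.e.\ $\varepsilon\propto\|\chi\|_2$), which together with $-\Delta\le 2(h_{\varphi_0}+C)$ yields the multiplicative comparison $\langle\psi,h_{\varphi_t}\psi\rangle\ge(1-C|t|\alpha^{-2})\langle\psi,h_{\varphi_0}\psi\rangle-C|t|\alpha^{-2}$ that the paper feeds into the min--max characterization.
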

The Lemma is proven in Subsection \ref{sec: minimizer}.  Using the persistence of the spectral gap, we can prove the following adiabatic theorem for the solution of the Landau-Pekar equations \eqref{eq: Landau Pekar equations}. As mentioned in the introduction, the idea of such result is based on \cite{frankgang2,frankgang3}, where an adiabatic theorem for the Landau-Pekar equations in one dimension is proved. 

%%%%%%%%%%%%%%%%%%%%%%%%%%%%%%%%%%%

\begin{theorem}
\label{thm:adiabatic}
Let $T >0$, $\Lambda >0$ and $( \psi_t, \varphi_t) \in H^1( \mathbb{R}^3) \times L^2( \mathbb{R}^3)$ denote the solution of the Landau-Pekar equations with initial value $( \psi_{\varphi_0}, \varphi_0 ) \in H^1( \mathbb{R}^3) \times L^2( \mathbb{R}^3)$. Assume that the Hamiltonian $h_{\varphi_t}$ has a unique positive and normalized ground state $\psi_{\varphi_t}$ and a spectral gap of size $\Lambda(t) > \Lambda$ for all $|t| \leq T$.
Then
\begin{align}
\label{eq: adiabatic theorem}
\| \psi_t - e^{- i \int_0^t du \; e( \varphi_u)} \psi_{\varphi_t} \|_2^2 \leq C \Lambda^{-4} \alpha^{-4} \left( 1 +  \left( 1 + \Lambda^{-1} \right)^2 \alpha^{-4} |t|^2 \right), \hspace{0.3cm} \forall |t| \leq  T.
\end{align}
\end{theorem}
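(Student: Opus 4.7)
The plan is to run the classical Kato-type adiabatic argument — remove the dynamical phase, apply Duhamel, integrate by parts against the reduced resolvent — and to track the powers of $\alpha$ and $\Lambda$ carefully using the explicit form of the Landau-Pekar equations. Set
\be
\tilde\psi_t := e^{\,i\int_0^t e(\varphi_u)\,du}\,\psi_t,
\ee
so that by a short computation from \eqref{eq: Landau Pekar equations} one gets $i\partial_t\tilde\psi_t = (h_{\varphi_t}-e(\varphi_t))\tilde\psi_t$ with $\tilde\psi_0 = \psi_{\varphi_0}$. Let $U(t,s)$ denote the unitary propagator generated by $h_{\varphi_t}-e(\varphi_t)$. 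Since $(h_{\varphi_t}-e(\varphi_t))\psi_{\varphi_t}=0$, Duhamel yields $\tilde\psi_t - \psi_{\varphi_t} = -\int_0^t U(t,s)\partial_s\psi_{\varphi_s}\,ds$. Introduce $P_s = \ketbr{\psi_{\varphi_s}}$, $Q_s = 1 - P_s$, and the reduced resolvent $R_s = Q_s(h_{\varphi_s}-e(\varphi_s))^{-1}Q_s$, which is well defined by Lemma \ref{lemma: spectral gap}. Because the positive ground state $\psi_{\varphi_s}$ can be chosen real, differentiating $\|\psi_{\varphi_s}\|^2=1$ gives $\langle\psi_{\varphi_s},\partial_s\psi_{\varphi_s}\rangle=0$, so $\partial_s\psi_{\varphi_s}\in\mathrm{Ran}\,Q_s$ and $\eta_s := R_s\partial_s\psi_{\varphi_s}$ satisfies $(h_{\varphi_s}-e(\varphi_s))\eta_s = \partial_s\psi_{\varphi_s}$. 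Combining this with the identity $U(t,s)(h_{\varphi_s}-e(\varphi_s))=-i\partial_sU(t,s)$, a single integration by parts yields
\be \label{eq:IBPrep}
\tilde\psi_t - \psi_{\varphi_t} = i\eta_t - iU(t,0)\eta_0 - i\int_0^t U(t,s)\,\partial_s\eta_s\,ds.
\ee

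Next come the bounds. Feynman--Hellmann gives $\partial_s\psi_{\varphi_s} = -R_s Q_s(\partial_sV_{\varphi_s})\psi_{\varphi_s}$, and from the second equation in \eqref{eq: Landau Pekar equations} one reads off $\|\partial_s\varphi_s\|_2 = O(\alpha^{-2})$; Hardy-type estimates for the Fr\"ohlich-type potential $V_{\partial_s\varphi_s}$ then give $\|\eta_s\| \leq C\Lambda^{-2}\alpha^{-2}$, so the two boundary terms in \eqref{eq:IBPrep} produce exactly the $\Lambda^{-4}\alpha^{-4}$ contribution after squaring. For the integral term Cauchy-Schwarz gives
\be
\Big\|\int_0^t U(t,s)\,\partial_s\eta_s\,ds\Big\|^2 \leq |t|^2 \sup_{|s|\leq|t|}\|\partial_s\eta_s\|^2,
\ee
and to match the $(1+\Lambda^{-1})^2\alpha^{-4}|t|^2$ factor in \eqref{eq: adiabatic theorem} it suffices to show the bound $\|\partial_s\eta_s\| \leq C(1+\Lambda^{-1})\Lambda^{-2}\alpha^{-4}$.

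The hard part is precisely this last estimate. Expanding $\partial_s\eta_s = (\partial_sR_s)\partial_s\psi_{\varphi_s} + R_s\partial_s^2\psi_{\varphi_s}$, the resolvent-derivative term is easily $O(\Lambda^{-3}\alpha^{-4})$ by the resolvent identity, but controlling $R_s\partial_s^2\psi_{\varphi_s}$ forces one to estimate $\partial_s^2\varphi_s$. Differentiating the second Landau-Pekar equation once more produces a dangerous term $\alpha^{-2}|k|^{-1}\partial_s\widehat{|\psi_s|^2}$, and naively $\partial_s|\psi_s|^2$ is only $O(1)$, which would cost a factor of $\alpha^2$. The saving comes from the continuity equation $\partial_s|\psi_s|^2 = -\nabla\cdot J_s$ with current $J_s = 2\,\Im(\overline{\psi_s}\,\nabla\psi_s)$: this current vanishes identically whenever $\psi_s$ is proportional to a real function, and in the adiabatic regime $\psi_s$ is close to the real $\psi_{\varphi_s}$, so $J_s$ itself is of order $\alpha^{-2}$ up to the adiabatic error and supplies the missing $\alpha^{-2}$. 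Closing this circular dependence is done by a Gr\"onwall-type bootstrap, or equivalently by a companion $H^1$-level estimate on $\psi_s - e^{-i\int_0^s e(\varphi_u)\,du}\psi_{\varphi_s}$ derived from energy conservation for the Landau-Pekar system; plugging the resulting bound for $\|\partial_s\eta_s\|$ into \eqref{eq:IBPrep} and squaring gives the claim.
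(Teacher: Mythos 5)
Your overall architecture is sound and genuinely different from the paper's: the paper differentiates $\|\psi_t-\widetilde\psi_{\varphi_t}\|_2^2$ directly, integrates by parts inside the resulting time integral, and closes with a Gronwall argument, whereas you integrate by parts in the Duhamel representation of the difference vector itself and need no Gronwall step; your boundary terms $\|\eta_t\|,\|\eta_0\|\le C\Lambda^{-2}\alpha^{-2}$ and bulk term $|t|\sup_s\|\partial_s\eta_s\|$ would indeed reproduce the stated bound upon squaring. The problem lies entirely in your treatment of the estimate $\|\partial_s\eta_s\|=O(\alpha^{-4})$, which is where the proof actually lives.

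The difficulty you identify there is not real, and the mechanism you invoke to resolve it is circular. First, no second time derivative of $\varphi_s$ is needed: by Lemma \ref{lemma: existence minimizer} one has $\partial_s\psi_{\varphi_s}=\alpha^{-2}R_sV_{i\varphi_s}\psi_{\varphi_s}$, hence $\eta_s=\alpha^{-2}R_s^2V_{i\varphi_s}\psi_{\varphi_s}$, and $\partial_s\eta_s$ involves only $\partial_s(R_s^2)$, $\partial_sV_{i\varphi_s}$ and $\partial_s\psi_{\varphi_s}$, each of which is $O(\alpha^{-2})$ by Lemma \ref{lemma: resolvent}, identity \eqref{eq: derivative potential with i-varphi} and Lemma \ref{lemma: existence minimizer}; this gives $\|\partial_s\eta_s\|\le C(1+\Lambda^{-1})\Lambda^{-2}\alpha^{-4}$ directly. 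The $O(1)$ quantity $\partial_s\sigma_{\psi_s}$ that worries you enters $\ddot\varphi_s$ multiplied by $-i\alpha^{-2}$, and its contribution to the potential vanishes \emph{identically}: $V_\varphi$ depends only on the combination $\varphi(k)+\overline{\varphi(-k)}$, and since $\partial_s|\psi_s|^2$ is real one has $\partial_s\sigma_{\psi_s}(k)=\overline{\partial_s\sigma_{\psi_s}(-k)}$, so $V_{-i\,\partial_s\sigma_{\psi_s}}=0$. This algebraic cancellation is exactly what is encoded in \eqref{eq: time derivative of the potential}--\eqref{eq: derivative potential with i-varphi}; it has nothing to do with $\psi_s$ being approximately real. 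Second, the substitute you propose --- smallness of the current $J_s$ because $\psi_s$ stays close to the real $\psi_{\varphi_s}$, closed by a ``bootstrap'' --- cannot be repaired as stated: to make $J_s$ small in a norm that controls $\partial_s\sigma_{\psi_s}$ you would need $H^1$-smallness of $\psi_s-e^{-i\int_0^s e(\varphi_u)du}\psi_{\varphi_s}$, which is neither what the theorem asserts (an $L^2$ bound) nor what energy conservation provides (only uniform $H^1$ bounds, not smallness of the difference). Replace that paragraph by the direct computation above and your sketch becomes a correct proof.
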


\begin{remark}
\label{remark: adiabatic theorem for short times}
One also has
$
 \| \psi_t - e^{- i \int_0^t du \; e( \varphi_u)} \psi_{\varphi_t} \|_{2}^2 \leq C  \alpha^{-2} \Lambda^{-1} \abs{t} 
$ for all $\abs{t} \leq T$.
\end{remark}

\begin{remark}
Note that the proof of the theorem only requires the existence of the spectral gap $\Lambda>0$.  Assuming $\Lambda$ to be of order one for times of order $\alpha^4$, the theorem shows that  $\psi_t$ is well approximated by the ground state $\psi_{\varphi_t}$ for any $|t| \ll \alpha^{4}$.
\end{remark}
\begin{remark}
Lemma \ref{lemma: spectral gap} shows that  the existence of the ground state and the spectral gap for all times $\abs{t} \leq C_{\Lambda} \alpha^2$   can be inferred from Assumption \ref{assumptions}.
 In this case, \eqref{eq: adiabatic theorem} is valid for all $\abs{t} \leq C_{\Lambda} \alpha^2$ without any assumptions on $h_{\varphi_t}$ and $\Lambda(t)$ at times $t >0$. Theorem \ref{thm:adiabatic} implies that there exists $C_\Lambda, \widetilde{C}_\Lambda >0$ such that
 \begin{align}
\label{eq: adiabatic theorem improved bound}
 \| \psi_t - e^{- i \int_0^t du \; e( \varphi_u)} \psi_{\varphi_t} \|_{2}^2 &\leq \widetilde{C_\Lambda} \alpha^{-4}
 \end{align}
for all $|t| \leq C_\Lambda \alpha^2$. 
 \end{remark} 

%%%%%%%%%%%%%%%%%%%%%%%%%%%%%%%%%%%%%%

Using Theorem \ref{thm:adiabatic} we can show that the Landau-Pekar equations (\ref{eq: Landau Pekar equations}) provide a good approximation to the solution of the Schr\"odinger equation (\ref{eq: Schroedinger equation}), for initial data of the form (\ref{eq:product}), with $\varphi_0$ satisfying Assumption \ref{assumptions} and with $\psi_0 = \psi_{\varphi_0}$ being the ground state of the operator $h_{\varphi_0}$ defined as in (\ref{eq: hphit}).

\begin{theorem}
\label{theorem: main theorem}
Let $\varphi_0$ satisfy Assumption \ref{assumptions} and $\alpha_0 >0$.   Let $( \psi_t, \varphi_t) \in H^1( \mathbb{R}^3) \times L^2( \mathbb{R}^3)$ denote the solution of the Landau-Pekar equations with initial data $( \psi_{\varphi_0}, \varphi_0 ) \in H^1( \mathbb{R}^3) \times L^2( \mathbb{R}^3)$ and
\begin{align}
\omega(t) \coloneqq \alpha^2 \Im \scp{\varphi_t}{\partial_t \varphi_t} + \norm{\varphi_t}_2^2 . 
\end{align}
Then, there exists a constant $C>0$  such that 
\begin{align}
\label{eq: bound main theorem}
\norm{e^{-i  H_{\alpha}  t} \psi_{\varphi_0} \otimes W(\alpha^2 \varphi_0) \Omega - e^{ - i \int_0^t du \, \omega(u)} \psi_t \otimes W(\alpha^2 \varphi_t) \Omega}
&\leq C \alpha^{-1}  \abs{t}^{1/2}
\end{align}
for all $\alpha \geq \alpha_0$. The constant $C>0$ depends only on $\alpha_0>0$ and the initial condition, i.e. $( \psi_{\varphi_0}, \varphi_0 ) \in H^1( \mathbb{R}^3) \times L^2( \mathbb{R}^3)$ and the spectral gap $\Lambda (0)$ of $h_{\varphi_0}$.
\end{theorem}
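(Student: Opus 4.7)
The plan is to combine a Weyl-conjugation reduction with the adiabatic input of Theorem \ref{thm:adiabatic}, handling the ultraviolet singularity that appears in the residual by means of a 1-phonon corrector built from the resolvent of $h_{\varphi_t}$ on the orthogonal complement of the ground state.

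First, unitarity of $W(\alpha^2\varphi_t)$ reduces the left-hand side of \eqref{eq: bound main theorem} to $\|\chi_t-\psi_t\otimes\Omega\|$, where
\[
\chi_t := e^{i\int_0^t\omega(u)du}\,W^*(\alpha^2\varphi_t)\,e^{-iH_\alpha t}\bigl(\psi_{\varphi_0}\otimes W(\alpha^2\varphi_0)\Omega\bigr).
\]
A direct calculation using the shift rule \eqref{eq:Weyl_comm} to conjugate $H_\alpha$ and the second Landau--Pekar equation to eliminate $\alpha^2\dot\varphi_t$ gives $i\partial_t\chi_t = \hat K(t)\chi_t$ with
\[
\hat K(t) := h_{\varphi_t} + \int a_k^* a_k\,dk + B(t) + B^*(t),\quad B^*(t):=\int|k|^{-1}\bigl(e^{-ikx}-\widehat{|\psi_t|^2}(k)\bigr)a_k^*\,dk.
\]
Setting $\xi_t := \chi_t-\psi_t\otimes\Omega$ and using $i\partial_t\psi_t=h_{\varphi_t}\psi_t$ then yields $i\partial_t\xi_t=\hat K(t)\xi_t+R_t$ with $R_t:=B^*(t)(\psi_t\otimes\Omega)$ and $\xi_0=0$; since $\chi_t$ and $\psi_t\otimes\Omega$ are both normalized, the task reduces to showing
\[
\|\xi_t\|^2 = 2\int_0^t \mathrm{Im}\langle\xi_s,R_s\rangle\,ds \le C\alpha^{-2}|t|.
\]

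The central obstruction is that $R_t$ is formally ultraviolet-divergent: its naive squared norm $\alpha^{-2}\int|k|^{-2}(1-|\widehat{|\psi_t|^2}(k)|^2)\,dk$ diverges at high frequencies, reflecting the fact that $\psi_t\otimes\Omega$ lies in the form domain but not in the operator domain of $\hat K(t)$. To pair $\langle\xi_s,R_s\rangle$ nonetheless, I would use Theorem \ref{thm:adiabatic} to replace $\psi_s$ by the phased ground state $\widetilde\psi_s:=e^{-i\int_0^s e(\varphi_u)du}\psi_{\varphi_s}$ in the source, and then introduce a 1-phonon corrector
\[
\mathfrak{c}_s(\cdot,k):=-\bigl(h_{\varphi_s}-e(\varphi_s)+1\bigr)^{-1}|k|^{-1}\bigl(e^{-ikx}-\widehat{|\psi_{\varphi_s}|^2}(k)\bigr)\psi_{\varphi_s}.
\]
The electron factor lies in the range of $Q_s:=1-|\psi_{\varphi_s}\rangle\langle\psi_{\varphi_s}|$, where Lemma \ref{lemma: spectral gap} guarantees that $h_{\varphi_s}-e(\varphi_s)$ is bounded below by $\Lambda$, so the resolvent is well-defined; moreover, the kinetic smoothing of $h_{\varphi_s}=-\Delta+V_{\varphi_s}$ supplies the ultraviolet decay $\|(h_{\varphi_s}-e(\varphi_s)+1)^{-1}e^{-ikx}\psi_{\varphi_s}\|_{L^2}\lesssim(1+|k|^2)^{-1}$, making $\mathfrak c_s$ a genuine Fock vector of norm $O(\alpha^{-1})$.

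Exploiting the relation $(h_{\varphi_s}-e(\varphi_s)+1)\mathfrak c_s=-R_s$ (in the ground-state-reference 1-phonon sector) and self-adjointness, the pairing can be rewritten as
\[
\langle\xi_s,R_s\rangle = -\bigl\langle\bigl(h_{\varphi_s}-e(\varphi_s)+1\bigr)\xi_s,\mathfrak c_s\bigr\rangle + (\text{adiabatic remainder}),
\]
where the adiabatic remainder, stemming from the difference between $\widehat{|\psi_s|^2}$ and $\widehat{|\psi_{\varphi_s}|^2}$, is bounded using Theorem \ref{thm:adiabatic}. The first term is estimated by $\|\mathfrak c_s\|\lesssim\alpha^{-1}$ times a controlled ``kinetic'' norm of $\xi_s$; together with the time-derivative bounds $\|\partial_s\mathfrak c_s\|,\|\partial_s\psi_{\varphi_s}\|\lesssim\alpha^{-2}$ (which follow from $\dot\varphi_s=O(\alpha^{-2})$ and a Feynman--Hellmann identity controlled by $\Lambda$), this gives $|\mathrm{Im}\langle\xi_s,R_s\rangle|\lesssim\alpha^{-1}\|\xi_s\|+\alpha^{-2}$, which upon integration and Cauchy--Schwarz in time yields $\|\xi_t\|^2\lesssim\alpha^{-2}|t|$ and hence \eqref{eq: bound main theorem}. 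The main obstacle I anticipate is the rigorous justification of these ultraviolet-sensitive manipulations (most safely through a momentum cut-off and a limiting argument) together with the careful tracking of the spectral-gap dependence; the essential role of Theorem \ref{thm:adiabatic} is that only when the reference electron wave function is the ground state $\psi_{\varphi_t}$ does the orthogonality needed to invert $h_{\varphi_t}-e(\varphi_t)$ become available.
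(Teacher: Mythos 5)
Your setup coincides with the paper's: the Weyl conjugation, the generator $h_{\varphi_s}+\mathcal N+\Phi_x-a(\sigma_{\psi_s})-a^*(\sigma_{\psi_s})$, the identity $\|\xi_t\|^2=2\int_0^t\Im\langle\xi_s,R_s\rangle\,ds$, the use of Theorem \ref{thm:adiabatic} to pass between $\psi_s$ and $\psi_{\varphi_s}$, and the use of the reduced resolvent of $h_{\varphi_s}$ to tame the ultraviolet singularity of the source (this is exactly Lemma \ref{lemma: frankgang 3.1} combined with Lemma \ref{lemma: resolvent} in the paper). The genuine gap is in the final estimate. Since $R_s$ (equivalently $(h_{\varphi_s}-e(\varphi_s)+1)\mathfrak c_s$) has infinite $L^2$-norm, you cannot extract a factor $\|\xi_s\|$ from the pairing; the resolvent must be split symmetrically, and then the best Cauchy--Schwarz gives is $|\Im\langle\xi_s,R_s\rangle|\lesssim\|(h_{\varphi_s}+C)^{1/2}\xi_s\|\,\|(h_{\varphi_s}+C)^{1/2}\mathfrak c_s\|\lesssim\alpha^{-1}$, hence $\|\xi_t\|^2\lesssim\alpha^{-1}|t|$ --- one full power of $\alpha$ short of the theorem. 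Your claimed intermediate bound $|\Im\langle\xi_s,R_s\rangle|\lesssim\alpha^{-1}\|\xi_s\|+\alpha^{-2}$ is therefore unjustified, and even granting it the Gronwall step produces an extra term of order $\alpha^{-2}|t|^2$, which destroys the conclusion for $1\ll|t|\ll\alpha^2$. (Relatedly, $\psi_{\varphi_0}\otimes W(\alpha^2\varphi_0)\Omega$ is not in the operator domain of $H_\alpha$, so no bound on $\|(h_{\varphi_s}-e(\varphi_s)+1)\xi_s\|$ is available; only form bounds via $-\Delta+\mathcal N\le C(H_\alpha+C)$ survive, which is why the symmetric splitting is forced.)

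The missing idea is the integration by parts in time, which is the heart of the paper's argument. After inserting $1=p_s+q_s$ into the pairing against $\widetilde\psi_{\varphi_s}\otimes\Omega$, the $p_s$-part exhibits the cancellation $\widehat{|\psi_{\varphi_s}|^2}-\widehat{|\psi_s|^2}$ and is controlled by the adiabatic theorem; the $q_s$-part is written as $q_s\Phi_x^-\widetilde\psi_{\varphi_s}\otimes\Omega=(h_{\varphi_s}-e(\varphi_s))R_s\Phi_x^-\widetilde\psi_{\varphi_s}\otimes\Omega$, and the factor $(h_{\varphi_s}-e(\varphi_s))$ is converted into a total time derivative of the electron propagator $\widetilde U_h^*(s;0)$ as in \eqref{eq:parts2}. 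Integrating by parts in $s$ then yields boundary terms of size $O(\alpha^{-2})$ and bulk terms in which $\partial_s$ hits only slowly varying objects ($R_s$, $V_{i\varphi_s}$, $\psi_{\varphi_s}$, and $\delta H_s\,\xi_s$), each contributing $O(\alpha^{-2})$ per unit time. Your corrector $\mathfrak c_s$ is essentially $R_s\Phi_x^-\psi_{\varphi_s}\otimes\Omega$ in disguise --- the right object --- but it must be played against the time oscillation (adiabatic decoupling), not against Cauchy--Schwarz; as proposed, the argument does not reach $\alpha^{-1}|t|^{1/2}$. The cutoff-and-limit issue you flag as the main obstacle is, by comparison, routine.
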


\begin{remark}
Theorem \ref{theorem: main theorem} shows that the Pekar ansatz is a good approximation for times small compared to $\alpha^2$. Note that even though $(\psi_t, \varphi_t)$ stay close to their initial values for these times (as shown in Theorem \ref{thm:adiabatic}), it is essential to use the time-evolved version in \eqref{eq: bound main theorem}. This is due to the large factor $\alpha^2$ in the Weyl operator $W(\alpha^2 \varphi_t)$, which leads to a very sensitive behavior of the state on $\varphi_t$.

During the publication process of this article it has been shown \cite[Theorem I.5]{LMRSS2021} that the Landau--Pekar equations approximately describe the Fr\"ohlich evolution of the electron- and one-phonon reduced density matrices up to times of order $\alpha^2$.  In order to approximate the many-body state in norm at times of order $\alpha^2$ it is, however, necessary to take quantum fluctuations into account; see \cite[Theorem I.3 and Remark I.7]{LMRSS2021} and \cite{M2021} for initial data minimizing the Pekar energy functional. In this context we would also like to mention new results about the persistence of the spectral gap for the Landau--Pekar equations \cite{FRS2021}.
\end{remark}

A first rigorous result concerning the effective evolution of the Fr\"ohlich polaron was obtained in \cite{frankschlein}, where the product $\psi_t \otimes W(\alpha^2 \varphi_0) \Omega$, with $\psi_t$ solving the linear equation
\begin{align}
 i\partial_t \psi_t = -\Delta \psi_t + \int dk \, |k|^{-1} \left( e^{ik \cdot x} \varphi_0 (k) + e^{-ik \cdot x} \overline{\varphi_0 (k)} \right) \psi_t (x) 
\end{align}
was proven to give a good approximation for the solution $e^{-i H_\alpha t} \psi_0 \otimes W (\alpha^2 \varphi_0) \Omega$ of the Schr\"odinger equation (\ref{eq: Schroedinger equation}), up to times of order one\footnote{In fact, a simple modification of the Gronwall argument in \cite{frankschlein} leads to convergence for times $|t| \ll \alpha$.}. This result was improved in \cite{frankgang}, where convergence towards (\ref{eq: Landau Pekar equations}) was established for all times $|t| \ll \alpha$ (the analysis in that work also gives more detailed information on the solution of Schr\"odinger equation (\ref{eq: Schroedinger equation}); in particular, it implies convergence of reduced density matrices). Note that the results of \cite{frankschlein,frankgang} hold for general initial data of the form (\ref{eq:product}), with no assumption on the relation between the initial electron wave function $\psi_0$ and the initial phonon field $\varphi_0$. Theorem \ref{theorem: main theorem} shows therefore that, under the additional assumption that $\psi_0 = \psi_{\varphi_0}$ the Landau-Pekar equations (\ref{eq: Landau Pekar equations}) provide a good approximation to (\ref{eq: Schroedinger equation}) for longer times (times short compared to $\alpha^2$). In this sense, Theorem \ref{theorem: main theorem} also extends the result of \cite{griesemer}, where the validity of the Landau-Pekar equations was established for times short compared to $\alpha^2$, but only for initial data $(\psi_0, \varphi_0)$ minimizing the Pekar energy functional (in this case, the solution of (\ref{eq: Landau Pekar equations}) is stationary, i.e. $(\psi_t, \varphi_t) = (e^{- i t \, e (\varphi_0)} \psi_0, \varphi_0)$ for all $t$). In fact, similarly to the analysis in \cite{griesemer}, we use the observation that the spectral gap above the ground state energy of $h_{\varphi_t}$ allows us to obtain bounds that are valid on longer time scales (it allows us to integrate by parts, after (\ref{eq:adia_1}) and after (\ref{eq:parts2}); this step is crucial to save a factor of $t$). 
The classical behavior of a quantum field does not only appear in the strong coupling limit of the Fr\"ohlich polaron but has also been studied in other situations.
In \cite{ginibrenironivelo} it was shown in case of the Nelson model that a quantum scalar field behaves classically in a certain limit where the number of field bosons becomes infinite while the coupling constant tends to zero.
The emergence of classical radiation was also proven for the Nelson model with ultraviolet cutoff \cite{falconi, leopoldpetrat}, the renormalized Nelson model \cite{ammarifalconi} and the Pauli-Fierz Hamiltonian \cite{leopoldpickl} in situations in which a large number of particles weakly couple to the radiation field. The articles \cite{davies_1979, hiroshima_1998, teufel2} revealed in addition that quantum fields can sometimes be replaced by two-particle interactions if the particles are much slower than the bosons of the quantum field.

\section{Preliminaries}

In this section, we collect  properties of the Landau-Pekar equations that are used in the proofs of Theorem \ref{thm:adiabatic} and Theorem \ref{theorem: main theorem}.
For $\psi \in L^2( \mathbb{R}^3)$ we define the function
\begin{align}
\label{eq: sigma}
\sigma_{\psi}(k) = |k|^{-1} \int d^3x \, e^{- i k \cdot x} |\psi (x)|^2  . 
\end{align}
The first and second derivative of the potential $V_{\varphi_t}$ are given by\footnote{We use the notation $\dot{f}$ to denote the derivative of a function $f$ with respect to time.} 
\begin{align}
\label{eq: time derivative of the potential}
\partial_t V_{\varphi_t}(x) 
= V_{\dot{\varphi}_t}(x)
&= - \alpha^{-2} \int d^3k \, \abs{k}^{-1} \big( e^{i k \cdot x}  i \varphi_t(k)
+ e^{- ik \cdot x} \overline{i \varphi_t(k)} \big)  
\nonumber \\
&\quad - i \alpha^{-2} \int d^3k \, \abs{k}^{-1} 
\big(  e^{i k \cdot x} \sigma_{\psi_t}(k) - e^{- i k \cdot x} \sigma_{\psi_t}(-k)  \big)
= - \alpha^{-2} V_{i \varphi_t}(x)
\end{align}
and 
\begin{align}
\label{eq: derivative potential with i-varphi}
\partial_t V_{i \varphi_t}(x) 
&= V_{i \dot{\varphi}_t}(x)
=  V_{\alpha^{-2} ( \varphi_t + \sigma_{\psi_t} )}(x)   
= \alpha^{-2} V_{\varphi_t}(x) + \alpha^{-2} V_{\sigma_{\psi_t}}(x) .
\end{align}
We define the energy functional $\mathcal{E}: H^1(\mathbb{R}^3) \times L^2(\mathbb{R}^3) \rightarrow \mathbb{R}$
\begin{align}
\mathcal{E} ( \psi, \varphi ) = \langle \psi, h_{\varphi} \psi \rangle + \| \varphi \|_2^2. 
\end{align}
Using standard methods one can show that the Landau-Pekar equations are well posed and that the energy $\mathcal{E} ( \psi_t , \varphi_t )$ is conserved if $(\psi_t, \varphi_t)$ is a solution of \eqref{eq: Landau Pekar equations}. 
For a proof of the following Lemma see \cite[Appendix C]{frankgang}.
\begin{lemma}[\cite{frankgang}, Lemma 2.1]
\label{lemma: well posedness LP}
For any $ ( \psi_0, \varphi_0) \in H^1( \mathbb{R}^3) \times L^2( \mathbb{R}^3)$, there is a unique global solution $( \psi_t, \varphi_t)$ of the Landau-Pekar equations \eqref{eq: Landau Pekar equations}. The following conservation laws hold true
\begin{align}
\| \psi_t \|_{2} = \| \psi_0 \|_{2} \hspace{0.3cm} and \hspace{0.3cm} \mathcal{E} ( \psi_t, \varphi_t ) = \mathcal{E} ( \psi_0, \varphi_0 ) \hspace{0.3cm} \forall t \in \mathbb{R}^3.
\end{align}
Moreover, there exists a constant $C$ such that 
\begin{align}
\| \psi_t \|_{H^1( \mathbb{R}^3)} \leq C, \hspace{0.3cm} \| \varphi_t \|_{2} \leq C 
\end{align}
for all $\alpha >0$ and all $t \in \mathbb{R}$.
\end{lemma}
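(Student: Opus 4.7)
\textbf{Proof plan for Lemma \ref{lemma: well posedness LP}.} The strategy is a three-stage one: (i) local well-posedness by a contraction mapping argument, (ii) the two conservation laws by direct time-differentiation, and (iii) uniform $H^{1}\times L^{2}$ bounds that upgrade local to global solutions. The essential analytical input is that, for $\psi\in H^{1}(\mathbb{R}^{3})$, the source $\sigma_{\psi}$ defined in \eqref{eq: sigma} satisfies $\|\sigma_{\psi}\|_{2}\le C\|\psi\|_{H^{1}}^{2}$; indeed $\|\sigma_{\psi}\|_{2}^{2}=c\iint|\psi(x)|^{2}|\psi(y)|^{2}|x-y|^{-1}dxdy$ is controlled by the Hardy--Littlewood--Sobolev and Gagliardo--Nirenberg inequalities. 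The same ingredient yields $|\langle \psi,V_{\varphi}\psi\rangle|=2|\mathrm{Re}\langle\sigma_{\psi},\varphi\rangle|\le C\|\varphi\|_{2}\|\psi\|_{2}^{3/2}\|\nabla\psi\|_{2}^{1/2}$, so that $V_{\varphi}$ is infinitesimally $(-\Delta)$-form bounded and $h_{\varphi}=-\Delta+V_{\varphi}$ is a well-defined self-adjoint operator with form domain $H^{1}$.

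For local well-posedness, I would recast \eqref{eq: Landau Pekar equations} in Duhamel form on a short interval $[0,T_{0}]$: $\psi_{t}=U_{\varphi}(t,0)\psi_{0}$, where $U_{\varphi}(t,s)$ is the unitary propagator associated with the time-dependent Hamiltonian $h_{\varphi_{u}}$, and $\varphi_{t}(k)=e^{-it/\alpha^{2}}\varphi_{0}(k)-i\alpha^{-2}\int_{0}^{t}e^{-i(t-s)/\alpha^{2}}\sigma_{\psi_{s}}(k)\,ds$. The existence and $H^{1}$-continuity of $U_{\varphi}$ for $\varphi\in C([0,T_{0}];L^{2})$ follow from Kato's theory for evolution equations with a common form domain, using continuity of $t\mapsto h_{\varphi_{t}}$ in form sense. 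The Lipschitz bounds $\|\sigma_{\psi}-\sigma_{\widetilde{\psi}}\|_{2}\le C(\|\psi\|_{H^{1}}+\|\widetilde{\psi}\|_{H^{1}})\|\psi-\widetilde{\psi}\|_{H^{1}}$ and a corresponding bound on $U_{\varphi}-U_{\widetilde{\varphi}}$ via Duhamel then make the map $(\psi,\varphi)\mapsto(\psi',\varphi')$ a contraction on a closed ball in $C([0,T_{0}];H^{1})\times C([0,T_{0}];L^{2})$ for $T_{0}$ small enough, yielding a unique local solution.

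Conservation of $\|\psi_{t}\|_{2}$ is immediate from self-adjointness of $h_{\varphi_{t}}$. Conservation of $\mathcal{E}$ is verified by differentiating: the $\langle\psi_{t},h_{\varphi_{t}}\psi_{t}\rangle$-piece contributes only $\langle\psi_{t},V_{\dot{\varphi}_{t}}\psi_{t}\rangle=2\mathrm{Re}\langle\sigma_{\psi_{t}},\dot{\varphi}_{t}\rangle$ (the Schr\"odinger part cancels by self-adjointness), while $2\mathrm{Re}\langle\varphi_{t},\dot{\varphi}_{t}\rangle$ picks up the remaining contribution. Substituting $\dot{\varphi}_{t}=-i\alpha^{-2}(\varphi_{t}+\sigma_{\psi_{t}})$ the sum equals $2\alpha^{-2}\bigl(\mathrm{Im}\langle\sigma_{\psi_{t}},\varphi_{t}\rangle+\mathrm{Im}\langle\varphi_{t},\sigma_{\psi_{t}}\rangle\bigr)=0$. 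To make this computation rigorous when $\psi_{0}$ is only in $H^{1}$, I would first prove it for $\psi_{0}\in H^{2}$ (where the computation is classical) and then pass to the limit using the continuity of the flow in $H^{1}\times L^{2}$.

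Finally, to obtain global solutions and uniform bounds, I use energy conservation together with the potential estimate above: since $\|\psi_{t}\|_{2}=\|\psi_{0}\|_{2}$ is fixed,
\begin{equation*}
\|\nabla\psi_{t}\|_{2}^{2}+\|\varphi_{t}\|_{2}^{2}=\mathcal{E}_{0}-\langle\psi_{t},V_{\varphi_{t}}\psi_{t}\rangle\le \mathcal{E}_{0}+C\|\varphi_{t}\|_{2}\|\nabla\psi_{t}\|_{2}^{1/2}.
\end{equation*}
A Young-type inequality $C\|\varphi_{t}\|_{2}\|\nabla\psi_{t}\|_{2}^{1/2}\le\tfrac{1}{2}\|\varphi_{t}\|_{2}^{2}+\tfrac{1}{2}\|\nabla\psi_{t}\|_{2}^{2}+C'$ absorbs the dangerous term into the left-hand side and yields $\alpha$-independent bounds on $\|\psi_{t}\|_{H^{1}}$ and $\|\varphi_{t}\|_{2}$ depending only on $\mathcal{E}_{0}$ and $\|\psi_{0}\|_{2}$. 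A standard continuation argument then extends the local solution to all of $\mathbb{R}$. The most delicate step, and the one I would treat most carefully, is the construction and $H^{1}$-preservation of the time-dependent propagator $U_{\varphi}(t,s)$ when $\varphi_{t}$ is only $L^{2}$-valued in space; everything else is either standard Hamiltonian bookkeeping or a direct consequence of HLS plus Gagliardo--Nirenberg.
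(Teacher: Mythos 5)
The paper does not prove this lemma itself --- it is quoted from Frank--Gang with the proof deferred to \cite[Appendix~C]{frankgang} --- so there is no internal argument to compare against; your outline reproduces exactly the standard strategy of that reference: a fixed point for the coupled Duhamel system (solving the Schr\"odinger part with the time-dependent propagator $U_\varphi$ and the field part explicitly), the two conservation laws by differentiation after regularizing to $H^2$ data, and global extension via energy conservation combined with the subquadratic bound $\abs{\langle\psi,V_\varphi\psi\rangle}=2\abs{\Re\langle\sigma_\psi,\varphi\rangle}\leq C\norm{\varphi}_2\norm{\psi}_2^{3/2}\norm{\nabla\psi}_2^{1/2}$, which lets Young's inequality absorb the interaction into $\norm{\nabla\psi_t}_2^2+\norm{\varphi_t}_2^2$; the estimates you use are the ones the paper records in Lemma~\ref{lemma: Potental}. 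The one genuinely delicate point, which you correctly flag, is the topology in which the fixed-point map contracts (one closes the contraction in the weaker $C([0,T_0];L^2)$ norm for $\varphi$, with $\psi$ slaved to $\varphi$ through the propagator, rather than contracting directly in $H^1$ for $\psi$), but this is a standard technicality and does not affect the validity of the argument.
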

%

%%%%%%%%%%%%%%%%%%%%%%%%%%%%%%%%%%%%%%%%%%

The next Lemma (also proven in  \cite[Appendices B and C]{frankgang}) collects some properties of quantities occurring in the Landau-Pekar equations. 

\begin{lemma}
\label{lemma: Potental} For  $V_\varphi$ being defined as in \eqref{eq: definition potential} 
there exists a constant $C>0$ such that for every $ \psi \in H^1( \mathbb{R}^3)$ and $ \varphi \in L^2( \mathbb{R}^3)$ 
\begin{align}
\| V_\varphi \|_6 \leq C  \| \varphi \|_2
 \hspace{0.3cm} 
 and \hspace{0.3cm} 
\| V_\varphi \psi \|_2 \leq  C  \| \varphi \|_2 \; \|\psi \|_{H^1( \mathbb{R}^3)} .
\end{align}
Furthermore, for every $\delta >0$ there exists $C_\delta >0$ such that 
\begin{align}
\pm V_\varphi  \leq - \delta \Delta + C_\delta, 
\end{align}
thus there exists $C>0$ such that
\begin{align}
\label{eq: bound for h-varphi}
- \frac{1}{2} \Delta - C \leq h_{\varphi} \leq - 2 \Delta + C.
\end{align}
Let $\sigma_\psi$ be defined as in \eqref{eq: sigma}. Then, there exists $C>0$ such that
\begin{align}
\label{eq: bound sigma-psi}
\| \sigma_\psi \|_{2} \leq C \| \psi \|_{H^1( \mathbb{R}^3)}^2 .
\end{align}
\end{lemma}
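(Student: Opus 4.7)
The plan is to treat the four bounds separately, using the common observation that $V_\varphi(x)=2\operatorname{Re} u_\varphi(x)$, where $u_\varphi=\mathcal{F}^{-1}(|k|^{-1}\varphi)$ is, up to constants, $(-\Delta)^{-1/2}$ applied to $\mathcal{F}^{-1}\varphi$. By Plancherel, $\|\nabla u_\varphi\|_2=\|\varphi\|_2$, so the Sobolev embedding $\dot H^1(\mathbb{R}^3)\hookrightarrow L^6(\mathbb{R}^3)$ gives immediately the first bound $\|V_\varphi\|_6\leq C\|\varphi\|_2$. For the multiplication estimate I would combine this with H\"older, $\|V_\varphi\psi\|_2\leq\|V_\varphi\|_6\|\psi\|_3$, and the embedding $H^1(\mathbb{R}^3)\hookrightarrow L^3(\mathbb{R}^3)$ (interpolating between $L^2$ and $L^6$), which yields $\|V_\varphi\psi\|_2\leq C\|\varphi\|_2\|\psi\|_{H^1}$.

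For the relative form bound, I would estimate for $\psi\in H^1(\mathbb{R}^3)$
\begin{align*}
|\langle\psi,V_\varphi\psi\rangle|\leq\|V_\varphi\|_6\,\|\psi\|_{12/5}^2\leq C\|\varphi\|_2\,\|\nabla\psi\|_2^{1/2}\|\psi\|_2^{3/2},
\end{align*}
where the second step uses the Gagliardo--Nirenberg estimate $\|\psi\|_{12/5}\leq C\|\nabla\psi\|_2^{1/4}\|\psi\|_2^{3/4}$ (the exponent $\theta=1/4$ being forced by scaling). Young's inequality with conjugate exponents $4$ and $4/3$ then absorbs $\|\nabla\psi\|_2^{1/2}\|\psi\|_2^{3/2}$ into $\delta\|\nabla\psi\|_2^2+C_\delta\|\psi\|_2^2$, establishing $\pm V_\varphi\leq-\delta\Delta+C_\delta$ as quadratic forms. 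The two-sided comparison $-\tfrac{1}{2}\Delta-C\leq h_\varphi\leq-2\Delta+C$ is then immediate: choose $\delta=1/2$ for the lower bound and $\delta=1$ for the upper one.

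For the bound on $\sigma_\psi$, I would apply Plancherel once more to rewrite
\begin{align*}
\|\sigma_\psi\|_2^2=c\iint\frac{|\psi(x)|^2\,|\psi(y)|^2}{|x-y|}\,dx\,dy,
\end{align*}
i.e.\ as the Coulomb self-energy of $|\psi|^2$, and then invoke the Hardy--Littlewood--Sobolev inequality (with kernel exponent $\lambda=1$, giving $p=6/5$ in dimension three) to bound this by $C\||\psi|^2\|_{6/5}^2=C\|\psi\|_{12/5}^4$; a further application of Gagliardo--Nirenberg controls $\|\psi\|_{12/5}$ by $\|\psi\|_{H^1}$, yielding $\|\sigma_\psi\|_2\leq C\|\psi\|_{H^1}^2$. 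No genuine obstacle is expected in any of these arguments; the only mildly delicate bookkeeping is keeping track of the $\delta$-dependence in the form bound, which Young's inequality handles cleanly.
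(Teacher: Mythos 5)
Your proposal is correct and follows essentially the same route as the paper: H\"older plus the embedding $H^1\hookrightarrow L^3$ for the multiplication bound, interpolation at $L^{12/5}$ plus Young for the form bound (with the same choices $\delta=1/2$ and $\delta=1$ for the two-sided comparison of $h_\varphi$), and the Coulomb self-energy representation with Hardy--Littlewood--Sobolev for $\sigma_\psi$. The only cosmetic difference is in the first estimate, where you obtain $\|V_\varphi\|_6\leq C\|\varphi\|_2$ from Plancherel and the Sobolev embedding $\dot H^1\hookrightarrow L^6$, while the paper writes $V_\varphi$ as a convolution of $\check\varphi$ with $|x|^{-2}$ and invokes Hardy--Littlewood--Sobolev -- two equivalent formulations of the same inequality.
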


\begin{remark}
Let $T >0$, $\Lambda >0$ and $( \psi_t, \varphi_t) \in H^1( \mathbb{R}^3) \times L^2( \mathbb{R}^3)$ denote the solution of the Landau-Pekar equations with initial value $( \psi_{\varphi_0}, \varphi_0 ) \in H^1( \mathbb{R}^3) \times L^2( \mathbb{R}^3)$. Assume that the Hamiltonian $h_{\varphi_t}$ has a unique positive and normalized ground state $\psi_{\varphi_t}$ and a spectral gap of size $\Lambda(t) > \Lambda$ for all $t \leq T$.  Lemma \ref{lemma: Potental} then implies the existence of constant such that
\begin{align}
\label{eq: h1 minimizer}
\| \psi_{\varphi_t} \|_{H^1( \mathbb{R}^3)} \leq C    
\quad \forall \abs{t} \leq T. 
\end{align}
\end{remark}

\begin{proof}[Proof of Lemma \ref{lemma: Potental}]
Recall the definition \eqref{eq: definition potential} of the potential $V_\varphi$. We write 
\begin{align}
V_{\varphi} (x) =   2^{3/2} \pi^{-1/2} \; \Re \int \frac{d^3y}{|x-y|^2} \check{\varphi} (y) ,
\end{align}
where $\check{\varphi}$ denotes the inverse Fourier transform defined for $\varphi \in L^1( \mathbb{R}^3)$ through
\begin{align}
 \check{\varphi}(x)  = (2 \pi)^{-3/2} \int d^3k \, e^{ ik \cdot x} \varphi(k).
\end{align}
The first inequality follows directly from the Hardy-Littlewood-Sobolev inequality
\begin{align}
\| V_\varphi \|_6 \leq C \| \varphi \|_2.
\end{align}
In order to prove the second inequality we use the first one and the H\"older inequality
\begin{align}
\| V_\varphi \psi \|_2 \leq   \| V_\varphi\|_6 \; \| \psi \|_3 \leq  C \| \varphi \|_2 \; \| \psi \|_3 .
\end{align}
Since the interpolation inequality together with the Sobolev inequality implies
\begin{align}
\| \psi \|_3 \leq \| \psi \|_2^{1/2} \; \| \psi \|_6^{1/2} \leq \|\psi \|_2^{1/2} \; \|\nabla \psi \|_2^{1/2},
\end{align}
we obtain
\begin{align}
\| V_\varphi \psi \|_2 \leq  C  \| \varphi \|_2 \; \| \nabla \psi \|_2^{1/2} \; \| \psi \|_2^{1/2} \leq
C
 \| \varphi \|_2 \| \psi \|_{H^1( \mathbb{R}^3)} .
\end{align}
The second operator inequality follows again from the Sobolev inequality. For this let $\psi \in H^1( \mathbb{R}^3)$, then for $\varepsilon >0$
\begin{align}
\langle \psi, V_\varphi \psi \rangle \leq  C \| V_\varphi \|_6 \| \psi \|_{12/5}^2 \leq C \| V_\varphi \|_6 \left( \varepsilon \| \nabla \psi \|_2^2 + \varepsilon^{-1} \| \psi \|_2^2 \right),
\end{align}
where we used the interpolation inequality. The first inequality of the Lemma implies 
\begin{align}
\pm \langle \psi, V_\varphi \psi \rangle \leq \langle \psi, \left( - \varepsilon \Delta + C_\varepsilon \right) \psi \rangle,
\end{align}
and \eqref{eq: bound for h-varphi} follows.

The last inequality of the Lemma follows from the observation that 
\begin{align}
\| \sigma_\psi \|^2_{2} &= \int \frac{dk}{|k|^2} \int dx dy \;| \psi (x)|^2 | \psi (y)|^2 e^{ik \cdot (x-y)} 
\nonumber \\
&= 2 \pi^2   \int dxdy \frac{| \psi (x)|^2 | \psi(y)|^2}{|x-y|}  \leq C \|  |\psi |^2  \|_{6/5}^2 ,
\end{align}
where we used again the Hardy-Littlewood-Sobolev inequality. As before, the interpolation and the Sobolev inequality imply \eqref{eq: bound sigma-psi}.

\section{Proof of the adiabatic theorem}

\subsection{The ground state $\psi_{\varphi_t}$}

\label{sec: minimizer}
 
Before proving the adiabatic theorem, we show that the spectral gap of $h_{\varphi_t}$ does not close for times of order $\alpha^2$. In particular, we show the existence of the ground state $\psi_{\varphi_t}$.

\begin{lemma}
\label{lemma: existence minimizer}
Let $\varphi_0$ satisfy Assumption \ref{assumptions}. Then, there exists a unique positive and normalized ground state $\psi_{\varphi_0}$ of $h_{\varphi_0} = - \Delta + V_{\varphi_0}$. 
Let $( \psi_t, \varphi_t) \in H^1( \mathbb{R}^3) \times L^2( \mathbb{R}^3)$ denote the solution of the Landau-Pekar equations \eqref{eq: Landau Pekar equations} with initial value $( \psi_{\varphi_0}, \varphi_0 )$. There exists $C>0$ such that for all $|t|  \leq C \alpha^2$, there exists a unique, positive and normalized ground state $\psi_{\varphi_t}$ of $h_{\varphi_t} = - \Delta + V_{\varphi_t}$ with corresponding eigenvalue $e( \varphi_t) < 0$. It satisfies 
\begin{align}
\partial_t \psi_{\varphi_t} = \alpha^{-2} R_t  V_{i \varphi_t} \psi_{\varphi_t}
 \hspace{0.3cm} \mathrm{with} \hspace{0.3cm} R_t = q_t (h_{\varphi_t} - e(\varphi_t))^{-1} q_t ,
\end{align}
where $q_{t} = 1- | \psi_{\varphi_t} \rangle \langle \psi_{\varphi_t} |$ denotes the projection onto the subspace of $L^2 ( \mathbb{R}^3)$ orthogonal to the span of $\psi_{\varphi_t}$. 
\end{lemma}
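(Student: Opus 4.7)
The plan is to combine standard spectral theory for Schr\"odinger operators with Kato's analytic perturbation theory, exploiting the slow variation of $\varphi_t$ in $\alpha$.

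For the existence at $t=0$, I would note that Lemma \ref{lemma: Potental} makes $V_{\varphi_0}$ infinitesimally $-\Delta$-form bounded, so $h_{\varphi_0}$ is self-adjoint on $H^1(\mathbb{R}^3)$; approximating $\check{\varphi}_0$ by compactly supported functions shows that $V_{\varphi_0}$ is relatively compact with respect to $-\Delta$, whence $\sigma_{\mathrm{ess}}(h_{\varphi_0})=[0,\infty)$. Assumption \ref{assumptions} then forces the infimum of the spectrum to be a discrete eigenvalue, and a standard Perron-Frobenius argument yields uniqueness and strict positivity of the normalized ground state $\psi_{\varphi_0}$, producing a spectral gap $\Lambda(0)>0$.

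For the persistence of the ground state and spectral gap on time scales of order $\alpha^2$, the key observation is slow variation of $\varphi_t$. The second Landau-Pekar equation together with \eqref{eq: bound sigma-psi} and Lemma \ref{lemma: well posedness LP} gives
\begin{align}
\partial_t\varphi_t = -i\alpha^{-2}(\varphi_t+\sigma_{\psi_t}), \qquad \|\varphi_t-\varphi_0\|_2 \leq C\alpha^{-2}|t|.
\end{align}
Revisiting the argument in the proof of Lemma \ref{lemma: Potental} then yields a quantitative form bound of the type $\pm(V_{\varphi_t}-V_{\varphi_0}) \leq \varepsilon(-\Delta) + C\varepsilon^{-1}\alpha^{-4}|t|^2$ for any $\varepsilon>0$. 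Hence, for $|t|\leq C\alpha^2$ with $C$ small enough (depending on $\Lambda(0)$), the operators $h_{\varphi_t}$ and $h_{\varphi_0}$ are close in norm-resolvent sense. A contour integral of $(z-h_{\varphi_t})^{-1}$ around a small circle isolating $e(\varphi_0)$ from the rest of $\mathrm{spec}(h_{\varphi_0})$ produces a rank-one spectral projector $P_t$ for $h_{\varphi_t}$, whose range is spanned by a unique positive normalized vector $\psi_{\varphi_t}$ with eigenvalue $e(\varphi_t)<0$, and the gap $\Lambda(t)$ remains bounded below by, say, $\Lambda(0)/2$.

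For the time-derivative formula, the analytic dependence of $h_\varphi$ on $\varphi$ together with smoothness of $t\mapsto\varphi_t$ gives, via Kato's perturbation theory, $C^1$ dependence of $\psi_{\varphi_t}$ on $t$. Differentiating the eigenvalue equation $h_{\varphi_t}\psi_{\varphi_t}=e(\varphi_t)\psi_{\varphi_t}$ and using $\partial_t V_{\varphi_t}=-\alpha^{-2} V_{i\varphi_t}$ from \eqref{eq: time derivative of the potential} yields
\begin{align}
(h_{\varphi_t}-e(\varphi_t))\partial_t\psi_{\varphi_t} = \big(\partial_t e(\varphi_t)\big)\psi_{\varphi_t} + \alpha^{-2} V_{i\varphi_t}\psi_{\varphi_t}.
\end{align}
Applying $q_t$ kills the first term, and since $q_t$ commutes with $h_{\varphi_t}-e(\varphi_t)$ (both annihilate $\psi_{\varphi_t}$), one obtains $(h_{\varphi_t}-e(\varphi_t))q_t\partial_t\psi_{\varphi_t}=\alpha^{-2} q_t V_{i\varphi_t}\psi_{\varphi_t}$. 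Since $\psi_{\varphi_t}$ is real-valued and the normalization $\|\psi_{\varphi_t}\|_2=1$ is preserved, $\langle\psi_{\varphi_t},\partial_t\psi_{\varphi_t}\rangle=0$, so $q_t\partial_t\psi_{\varphi_t}=\partial_t\psi_{\varphi_t}$. Inverting $h_{\varphi_t}-e(\varphi_t)$ on the range of $q_t$ gives the claimed $\partial_t\psi_{\varphi_t}=\alpha^{-2}R_t V_{i\varphi_t}\psi_{\varphi_t}$. The main obstacle is the quantitative persistence of the spectral gap: one must translate the $L^2$ closeness of $\varphi_t$ and $\varphi_0$ into norm-resolvent closeness of $h_{\varphi_t}$ and $h_{\varphi_0}$ with constants tight enough that a single Kato contour around $e(\varphi_0)$ isolates exactly one eigenvalue of $h_{\varphi_t}$ for all $|t|\leq C\alpha^2$. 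Once this is established, the derivation of the formula is essentially algebra.
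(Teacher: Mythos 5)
Your proposal is correct, and the derivation of the formula for $\partial_t\psi_{\varphi_t}$ (differentiate the eigenvalue equation, project with $q_t$, use $\langle\psi_{\varphi_t},\dot\psi_{\varphi_t}\rangle=0$ from reality and normalization, invert on $\mathrm{ran}\, q_t$) is essentially identical to the paper's, which just phrases the elimination of the $\dot e(\varphi_t)$-term via the Hellmann--Feynman theorem rather than by applying $q_t$. Where you genuinely diverge is in the persistence of the ground state for $|t|\leq C\alpha^2$: the paper observes that, once $V_{\varphi_t}\in L^6$, existence, uniqueness and positivity of the ground state at a fixed time follow from the general theorem on Schr\"odinger operators with negative spectral infimum (Lieb--Loss, Thm.\ 11.5), so all that must be propagated in time is the single inequality $e(\varphi_t)<0$; this is done by the one-line variational bound $e(\varphi_t)\leq\langle\psi_{\varphi_0},h_{\varphi_t}\psi_{\varphi_0}\rangle = e(\varphi_0)-\alpha^{-2}\int_0^t ds\,\langle\psi_{\varphi_0},V_{i\varphi_s}\psi_{\varphi_0}\rangle\leq e(\varphi_0)+C|t|\alpha^{-2}$, using the fixed trial state $\psi_{\varphi_0}$. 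You instead run Kato perturbation theory: $\|\varphi_t-\varphi_0\|_2\leq C\alpha^{-2}|t|$, a quantitative form bound on $V_{\varphi_t}-V_{\varphi_0}$, norm-resolvent closeness, and a Riesz contour projection. This is heavier machinery but buys more --- it yields the persistence of the spectral gap $\Lambda(t)\geq\Lambda(0)/2$ at the same time, which the paper defers to a separate lemma (proved there by the min--max principle rather than by resolvent perturbation). Two small points in your route deserve a sentence each if written out in full: the passage from the form bound to norm-resolvent closeness (a KLMN-type argument using $-\Delta\leq C(h_{\varphi_0}+C)$), and the verification that the eigenvalue captured by the contour is indeed the \emph{bottom} of $\mathrm{spec}(h_{\varphi_t})$, i.e.\ that no spectrum appears below the contour; both follow from the same closeness estimate, so these are expository rather than substantive gaps.
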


\begin{proof}
Lemma \ref{lemma: well posedness LP}  and Lemma \ref{lemma: Potental} imply that $V_{\varphi_t} \in L^6( \mathbb{R}^3)$ for all $t \in \mathbb{R}$. The existence of the ground state $\psi_{\varphi_t}$ at time $t=0$ then follows from the negativity of the infimum of the spectrum; see \cite[Theorem 11.5]{liebloss}. In order to prove the existence of the ground state $\psi_{\varphi_t}$ of $h_{\varphi_t}$ at later times, it suffices to show that $e( \varphi_t)$ is negative. For this we pick the ground state $\psi_{\varphi_0}$ at time $t=0$ and estimate
\begin{align}
\label{eq: bound lowest eigenvalue}
e( \varphi_t) &\leq \langle \psi_{\varphi_0}, ( - \Delta + V_{\varphi_t} ) \psi_{\varphi_0} \rangle = e( \varphi_0) - \alpha^{-2} \int_0^t ds \; \langle \psi_{\varphi_0}, V_{i \varphi_s} \psi_{\varphi_0} \rangle
\nonumber \\
 &\leq e( \varphi_0) + C \int_0^t ds\;   \alpha^{-2} \| \varphi_s \|_2 \; \| \psi_{\varphi_0} \|_{H^1( \mathbb{R}^3)}^2 \leq e( \varphi_0) + C |t| \alpha^{-2},
\end{align}
by means of \eqref{eq: time derivative of the potential}, Lemma \ref{lemma: well posedness LP}  and Lemma \ref{lemma: Potental}.
Thus if we restrict our consideration to times $|t| < C^{-1} |e( \varphi_0)|  \alpha^{2} $, we conclude that $e( \varphi_t) <0$. The ground state $\psi_{\varphi_t}$ satisfies
\begin{align}
\label{eq:deriv minimizer 1}
0= \left(h_{\varphi_t} - e( \varphi_t ) \right) \psi_{\varphi_t}
\end{align}
Differentiating both sides of the equality with respect to the time variable leads to
\begin{align}
\label{eq:deriv minimizer 2}
0 = \left( \dot{h}_{\varphi_t} - \dot{e}( \varphi_t) \right) \psi_{\varphi_t} +   \left(h_{\varphi_t} - e( \varphi_t ) \right) \dot{\psi}_{\varphi_t}.
\end{align}
On the one hand
$\dot{h}_{\varphi_t} = V_{\dot{\varphi}_t} =- \alpha^{-2} V_{i \varphi_t}$ by means of \eqref{eq: time derivative of the potential} and on the other hand, the Hellmann-Feynman theorem implies
\begin{align}
\label{eq:  time-derivative lowest eigenvalue}
\dot{e}( \varphi_t) = \langle \psi_{\varphi_t} , \dot{h}_{\varphi_t} \psi_{\varphi_t} \rangle =- \alpha^{-2} \langle \psi_{\varphi_t} ,V_{i \varphi_t} \psi_{\varphi_t} \rangle 
\end{align}
so that \eqref{eq:deriv minimizer 2} becomes
\begin{align}
0= -\alpha^{-2} q_t V_{i \varphi_t} \psi_{\varphi_t} + \left(h_{\varphi_t} - e( \varphi_t ) \right) \dot{\psi}_{\varphi_t}.
\end{align}
Since $\psi_{\varphi_t}$ is chosen  to be real and normalized for all $t \in \mathbb{R}$, it follows that $\scp{\psi_{\varphi_t}}{\dot{\psi}_{\varphi_t}} = 0$ for all $t \in \mathbb{R}$. Hence,  
\begin{align}
\dot{\psi}_{\varphi_t} =  \alpha^{-2} q_t ( h_{\varphi_t} - e( \varphi_t))^{-1} q_t V_{i \varphi_t} \psi_{\varphi_t}  = \alpha^{-2} R_t V_{i \varphi_t} \psi_{\varphi_t},
\end{align}
completing the proof.
\end{proof}

%%%%%%%%%%%%%%%%%%%%%%%%%%%%%%%%%%%%%%%%%%

Using the Lemma above, we prove Lemma \ref{lemma: spectral gap}. 

\begin{proof}[Proof of Lemma \ref{lemma: spectral gap}.]
By the min-max principle \cite[Theorem 12.1]{liebloss}, the first excited eigenvalue of $h_{\varphi_t}$ (or the bottom of the essential spectrum) is given by
\begin{align}
\label{eq:e1}
e_1(t) = \inf_{\substack{A \subset L^2( \mathbb{R}^3) \\ \mathrm{dim} A=2}} \sup_{\substack{\psi \in A \\ \| \psi \|_2 =1} } \langle\psi, h_{\varphi_t} \psi \rangle .
\end{align}
For any $\psi \in L^2( \mathbb{R}^3)$ with $\| \psi \|_2 =1$ we have by Lemma \ref{lemma: Potental}
\begin{align}
\langle \psi, h_{\varphi_t} \psi \rangle =& \langle \psi, h_{\varphi_0} \psi \rangle - \alpha^{-2} \int_0^t ds \; \langle \psi, V_{i \varphi_s} \psi \rangle
\nonumber \\
\geq& \langle \psi, h_{\varphi_0} \psi \rangle - C |t| \alpha^{-2} \sup_{|s| \leq |t|} \| \varphi_s \|_2 \| \nabla \psi \|_2^2  - C |t| \alpha^{-2} \sup_{|s| \leq |t|} \| \varphi_s \|_2 \| \psi\|_2^2
\nonumber \\
\geq& (1-2 C |t| \alpha^{-2} ) \langle \psi, h_{\varphi_0} \psi \rangle - C |t| \alpha^{-2} 
\end{align}
Inserting in \eqref{eq:e1},  we conclude that 
\begin{align}
e_1(t) \geq (1-2C|t|\alpha^{-2}) e_1(0) - C |t| \alpha^{-2} .
\end{align}
With $e(\varphi_t) \leq e(\varphi_0) + C |t| \alpha^{-2}$ (see \eqref{eq: bound lowest eigenvalue}), we obtain
\begin{align}
\Lambda (t) \geq \Lambda (0) - C |t| \alpha^{-2},
\end{align}
completing the proof.
\end{proof}
 
%%%%%%%%%%%%%%%%%%%%%%%%%%%%%%%%%%%%%%%%%%

Using the persistence of the spectral gap, the resolvent $R_t = q_t \left( h_{\varphi_t}- e( \varphi_t) \right)^{-1} q_t$ can be estimated as follows.

\begin{lemma}
\label{lemma: resolvent}
Let $T >0$, $\Lambda >0$ and $( \psi_t, \varphi_t) \in H^1( \mathbb{R}^3) \times L^2( \mathbb{R}^3)$ denote the solution of the Landau-Pekar equations with initial value $( \psi_{\varphi_0}, \varphi_0 ) \in H^1( \mathbb{R}^3) \times L^2( \mathbb{R}^3)$. Assume that the Hamiltonian $h_{\varphi_t}$ has a unique positive and normalized ground state $\psi_{\varphi_t}$ with $e(\varphi_t) < 0$ and a spectral gap of size $\Lambda(t) > \Lambda$ for all $t \leq T$.
Then, for all $|t| \leq T$
\begin{align}
\| R_t \| \leq \Lambda^{-1},  \hspace{0.5cm} \| ( - \Delta +1)^{1/2} R_t^{1/2} \| \leq C ( 1 + \Lambda^{-1})^{1/2}, 
\end{align}
and
\begin{align}
\| \dot{R}_t \| \leq C  \Lambda^{-3/2}\alpha^{-2}  ( 1 + \Lambda^{-1})^{1/2} 
\end{align}
with $C >0$ depending only on $\varphi_0$.
\end{lemma}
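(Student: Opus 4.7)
The plan is to establish the three bounds in order, since each feeds into the next. For $\|R_t\|\le \Lambda^{-1}$, I would just note that $q_t$ commutes with $h_{\varphi_t}$ (the eigenspace of $\psi_{\varphi_t}$ and its orthogonal complement are both invariant), so on $\mathrm{Ran}(q_t)$ the operator $h_{\varphi_t}-e(\varphi_t)$ has spectrum in $[\Lambda(t),\infty)\subset[\Lambda,\infty)$; functional calculus gives both $\|R_t\|\le \Lambda^{-1}$ and $R_t\ge 0$, so that $R_t^{1/2}$ makes sense.

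For the $H^1$ bound on $R_t^{1/2}$, I would combine the form inequality $-\Delta+1\le 2(h_{\varphi_t}-e(\varphi_t))+C$ (obtained from Lemma \ref{lemma: Potental} together with the uniform boundedness of $\|\varphi_t\|_2$ from Lemma \ref{lemma: well posedness LP} and of $|e(\varphi_t)|$ from \eqref{eq: bound lowest eigenvalue}) with the functional-calculus identity $R_t^{1/2}(h_{\varphi_t}-e(\varphi_t))R_t^{1/2}=q_t$ on the invariant subspace $\mathrm{Ran}(q_t)$. Sandwiching the inequality between $R_t^{1/2}$ and $R_t^{1/2}$ yields
\begin{align*}
\|(-\Delta+1)^{1/2}R_t^{1/2}\psi\|^2 \le 2\langle\psi, q_t\psi\rangle + C\langle\psi, R_t\psi\rangle \le \big(2+C\Lambda^{-1}\big)\|\psi\|^2.
\end{align*}

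The derivative bound is the main work. I would start from $R_t=(h_{\varphi_t}-e(\varphi_t)+P_t)^{-1}-P_t$ with $P_t=1-q_t$, apply $\tfrac{d}{dt}A^{-1}=-A^{-1}\dot A A^{-1}$, and use $R_tP_t=P_tR_t=0$ together with the Hellmann--Feynman identity $P_t(\dot h_{\varphi_t}-\dot e(\varphi_t))P_t=0$ (already exploited in Lemma \ref{lemma: existence minimizer}) to kill the $P_t$--diagonal block. The key simplification is that Lemma \ref{lemma: existence minimizer} gives $R_t\dot h_{\varphi_t}\psi_{\varphi_t}=-\dot\psi_{\varphi_t}$, so the off-diagonal terms $R_t\dot h_{\varphi_t} P_t$ and $P_t\dot h_{\varphi_t} R_t$ cancel the rank-one pieces of $-\dot P_t$ exactly. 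What survives is
\begin{align*}
\dot R_t = -R_t\dot h_{\varphi_t} R_t + \dot e(\varphi_t)\,R_t^2 - R_t|\dot\psi_{\varphi_t}\rangle\langle\psi_{\varphi_t}| - |\psi_{\varphi_t}\rangle\langle\dot\psi_{\varphi_t}|R_t.
\end{align*}

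The hard part is bounding the leading term $\alpha^{-2}R_t V_{i\varphi_t} R_t$, since $V_{i\varphi_t}$ is not bounded on $L^2$. The trick is to rewrite
\begin{align*}
R_t V_{i\varphi_t} R_t = R_t \cdot V_{i\varphi_t}(-\Delta+1)^{-1/2} \cdot (-\Delta+1)^{1/2}R_t^{1/2} \cdot R_t^{1/2}
\end{align*}
and combine $\|R_t\|\le\Lambda^{-1}$, $\|V_{i\varphi_t}(-\Delta+1)^{-1/2}\|\le C\|\varphi_t\|_2\le C$ (Lemma \ref{lemma: Potental}), the second bound just proved, and $\|R_t^{1/2}\|\le\Lambda^{-1/2}$, which together deliver the target $C\alpha^{-2}\Lambda^{-3/2}(1+\Lambda^{-1})^{1/2}$. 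The remaining three terms can each be bounded by $C\alpha^{-2}\Lambda^{-2}$ using $|\dot e(\varphi_t)|\le C\alpha^{-2}$ (Hellmann--Feynman plus Lemma \ref{lemma: Potental}) and $\|\dot\psi_{\varphi_t}\|\le C\alpha^{-2}\Lambda^{-1}$ (Lemma \ref{lemma: existence minimizer}); since $\Lambda^{-1/2}\le (1+\Lambda^{-1})^{1/2}$, this is absorbed into the main estimate. The principal obstacle is therefore the cancellation of cross terms in $\dot R_t$ and the transfer of the unboundedness of $V_{i\varphi_t}$ onto the $H^1$-smoothing of $R_t^{1/2}$, which is precisely why the second estimate has to be proved before the third.
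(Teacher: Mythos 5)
Your proposal is correct and follows essentially the same route as the paper: the same spectral-gap bound for $\|R_t\|$, the same sandwiching of $-\Delta+1\leq C(h_{\varphi_t}-e(\varphi_t))+C$ between $R_t^{1/2}$'s using $R_t^{1/2}(h_{\varphi_t}-e(\varphi_t))R_t^{1/2}=q_t$, and the same formula for $\dot R_t$ (your expression coincides with \eqref{eq: derivative rho} after inserting $\dot\psi_{\varphi_t}=\alpha^{-2}R_tV_{i\varphi_t}\psi_{\varphi_t}$), estimated by factoring $V_{i\varphi_t}$ through $(-\Delta+1)^{-1/2}$ exactly as in the paper. The only cosmetic difference is that you derive $\dot R_t$ from $R_t=(h_{\varphi_t}-e(\varphi_t)+p_t)^{-1}-p_t$ rather than by differentiating $q_t(h_{\varphi_t}-e(\varphi_t))^{-1}q_t$ directly, which changes nothing of substance.
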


\begin{proof}
Since the spectral gap is at least of size $\Lambda>0$ for times $|t| \leq T$, it follows that
\begin{align}
\| R_t \| \leq \Lambda^{-1}. 
 \end{align}
To prove the second inequality, we estimate for arbitrary $\psi \in L^2( \mathbb{R}^3)$
\begin{align}
\|( - \Delta +1)^{1/2}  R_t^{1/2} \psi \|^2_2 &= \langle  \psi, R_t^{1/2} ( - \Delta +1) R_t^{1/2} \psi \rangle 
\leq C  \langle \psi, R_t^{1/2} \left( h_{\varphi_t}  + C \right) R_t^{1/2} \psi \rangle,\label{eq:form bound p^2}
\end{align}
where the last inequality follows from Lemma \ref{lemma: Potental}. Thus, 
\begin{align}
\|(- \Delta +1)^{1/2}  R_t^{1/2} \psi \|^2_2 &\leq C \langle  \psi, R_t^{1/2} \left( h_{\varphi_t}  -  e(\varphi_t) 
+  e(\varphi_t) + C \right) R_t^{1/2} \psi \rangle  \nonumber \\
&= C  \langle  \psi, \big( q_t + ( e(\varphi_t) + C ) R_t   \big)  \psi \rangle 
\nonumber \\
&\leq C  \langle  \psi, \big( 1 +  R_t   \big)  \psi \rangle 
\end{align}
since $e( \varphi_t) <0$ for all $|t| \leq T$ by assumption.
The gap condition then implies
\begin{align}
\label{eq: resolvent sqrt p }
\|(- \Delta +1)^{1/2} R_t^{1/2}\psi \|^2_2 \leq C  ( 1 + \Lambda^{-1}  ) .
\end{align}
In order to prove the third bound of the Lemma we calculate (with $p_t = 1 - q_t$)
\begin{align}
\dot{R}_t &=- \alpha^{-2}  p_t V_{i\varphi_t} R_t^2  - \alpha^{-2} R_t^2 V_{i \varphi_t} p_t 
+  q_t \left( \partial_t ( h_{\varphi_t} - e( \varphi_t))^{-1} \right) q_t 
\end{align}
by means of the Leibniz rule and Lemma \ref{lemma: existence minimizer}. With the resolvent identities, \eqref{eq: time derivative of the potential} and \eqref{eq:  time-derivative lowest eigenvalue}  this becomes
\begin{align}
\label{eq: derivative rho}
\dot{R}_t &=- \alpha^{-2} p_t V_{i\varphi_t} R_t^2  - \alpha^{-2} R_t^2 V_{i \varphi_t} p_t 
-  q_t ( h_{\varphi_t} - e( \varphi_t))^{-1} \left( \dot{h}_{\varphi_t} - \dot{e}( \varphi_t) \right) ( h_{\varphi_t} - e( \varphi_t))^{-1}  q_t 
\nonumber \\
&=- \alpha^{-2} p_t V_{i\varphi_t} R_t^2  - \alpha^{-2} R_t^2 V_{i \varphi_t} p_t + \alpha^{-2} R_t  \left( V_{i \varphi_t} - \langle \psi_{\varphi_t}, V_{i \varphi_t} \psi_{\varphi_t} \rangle \right) R_t .
\end{align}
Hence, Lemma \ref{lemma: Potental} leads to
\begin{align}
\| \dot{R}_t \| \leq& C \alpha^{-2} \| V_{i \varphi_t} R_t \| \; \| R_t\| + C \alpha^{-2} \| R_t \|^2 \; \|  \psi_{\varphi_t} \|_{H^1( \mathbb{R}^3)}^2 
\nonumber \\
\leq& C \alpha^{-2} \| (- \Delta +1)^{1/2} R_t \| \; \| R_t \| + \alpha^{-2} \| R_t \|^2 \; \| \psi_{\varphi_t} \|_{\Hone}^2
\nonumber \\
\leq& C \Lambda^{-3/2}\alpha^{-2}  ( 1 + \Lambda^{-1}  )^{1/2} + C \alpha^{-2} \Lambda^{-2} ,
\end{align}
for all $|t| \leq T$, where we used \eqref{eq: h1 minimizer} and the second bound of the Lemma. 
\end{proof}

%%%%%%%%%%%%%%%%%%%%%%%%%%%%%%%%%%%%%%%%%%

\subsection{Proof of Theorem \ref{thm:adiabatic}}

In the following we denote $\widetilde{\psi}_{\varphi_t} = e^{-i \int_0^t du \; e( \varphi_u)} \psi_{\varphi_t}$. The fundamental theorem of calculus implies that
 \begin{align}
 \| \psi_t - \widetilde{\psi}_{\varphi_t} \|_{2}^2 =& -\int_0^t  ds \; \frac{d}{ds} \;  2 \; \Re \langle \psi_s, \widetilde{\psi}_{\varphi_s} \rangle 
\nonumber \\
 =&-2  \int_0^t ds \; \Re  \langle -i h_{\varphi_s} \psi_s, \widetilde{\psi}_{\varphi_s} \rangle
\nonumber \\ 
 & - 2 \int_0^t ds \; \Re \langle \psi_s, \left( -i e ( \varphi_s) + \alpha^{-2} R_s V_{i \varphi_s}\right) \widetilde{\psi}_{\varphi_s} \rangle
\nonumber \\
 =& -2 \alpha^{-2} \int_0^t ds \; \Re \langle \psi_s, \;R_s V_{i \varphi_s}  \widetilde{\psi}_{\varphi_s} \rangle,
 \label{eq:adia_1a}
 \end{align}
where we used that $h_{\varphi_s} \widetilde{\psi}_{\varphi_s} = e( \varphi_s) \widetilde{\psi}_{\varphi_s}$ and Lemma \ref{lemma: existence minimizer} to compute the derivative of the ground state $\psi_{\varphi_s}$. 
Using the Cauchy-Schwarz inequality, Lemma \ref{lemma: spectral gap} and Lemma \ref{lemma: Potental} together with Lemma \ref{lemma: resolvent} and \eqref{eq: h1 minimizer}, we obtain the inequality from Remark \ref{remark: adiabatic theorem for short times}, i.e. a bound of order $ \alpha^{-2} |t|$. In the following we shall improve this bound. 
We define $\widetilde{\psi}_s := e^{i \int_0^s d\tau \; e ( \varphi_\tau)} \psi_s$ satisfying
\begin{align}
 i\partial_s \widetilde{\psi}_s = \left( h_{\varphi_s} - e( \varphi_s) \right) \widetilde{\psi}_s
\end{align}
 and write \eqref{eq:adia_1a} as
\begin{align}
 \| \psi_t - \widetilde{\psi}_{\varphi_t} \|_{2}^2 =&-2 \alpha^{-2} \int_0^t ds \; \Re \langle q_s \widetilde{\psi}_s, \; R_s V_{i \varphi_s}  \psi_{\varphi_s} \rangle . \label{eq:adia_1}
\end{align} 
Then, we exploit that the time derivative of $\widetilde{\psi}_s$ is of order one while the time derivatives of $R_s$, $V_{i \varphi_s}$ and $\psi_{\varphi_s}$ are of order $\alpha^{-2}$; compare also with \cite[p.9]{teufel}.
We observe that
\begin{align}
i \partial_s \left( \widetilde{\psi}_s - \psi_{\varphi_s}
\right)
= \left( h_{\varphi_s} - e(\varphi_s) \right) \widetilde{\psi}_s
- i \alpha^{-2} R_s V_{i \varphi_s} \psi_{\varphi_s}
\end{align}
Hence,
\begin{align}
q_s \widetilde{\psi}_s = R_s i \partial_s \left( \widetilde{\psi}_s - \psi_{\varphi_s} \right) 
+ i \alpha^{-2} R_s^2 V_{i \varphi_s} \psi_{\varphi_s}.
\end{align}
Plugging this identity into \eqref{eq:adia_1} and using that $\Im \scp{\psi_{\varphi_s}}{V_{i \varphi_s} R_s^3 V_{i \varphi_s} \psi_{\varphi_s}} = 0$, we obtain
\begin{align}
 \| \psi_t - \widetilde{\psi}_{\varphi_t} \|_{2}^2 =& - 2 \alpha^{-2} \int_0^t ds \;  \Im  \langle   \partial_s ( \widetilde{\psi}_s - \psi_{\varphi_s} )  , \; R_s^2 V_{i \varphi_s}  \psi_{\varphi_s} \rangle .
\end{align}
Integrating by parts and using  the initial condition $\psi_0 = \psi_{\varphi_0}$  lead to
\begin{align}
 \| \psi_t - \widetilde{\psi}_{\varphi_t} \|_{2}^2 
 =&2\alpha^{-2}  \int_0^t ds\; \Im \langle   \left( \widetilde{\psi}_s - \psi_{\varphi_s} \right)  , \; \partial_s \left( R_s^2 V_{i\varphi_s} \psi_{\varphi_s}\right) \rangle 
 \nonumber \\
 & -2 \alpha^{-2} \Im \langle  \left( \widetilde{\psi}_t - \psi_{\varphi_t} \right) , \;R_t^2 V_{i\varphi_t} \psi_{\varphi_t} \rangle.
\end{align}
The Leibniz rule with \eqref{eq: derivative potential with i-varphi} and Lemma \ref{lemma: existence minimizer} lead to
\begin{subequations}
 \begin{align}
 \| \psi_t - \widetilde{\psi}_{\varphi_t} \|_{2}^2 
 =&  - 2 \alpha^{-2} \Im \langle R_t  \left(\psi_t - \widetilde{\psi}_{\varphi_t} \right)  , \;R_t V_{i \varphi_t} \widetilde{\psi}_{\varphi_t} \rangle \label{eq:adiab_2,1}\\
 &+  2 \alpha^{-2} \int_0^t ds \;  \Im \langle  \left( \psi_s - \widetilde{\psi}_{\varphi_s} \right) , \; \left( \partial_s R_s^2 \right) V_{i \varphi_s} \widetilde{\psi}_{\varphi_s}\rangle 
\label{eq:adiab_2,2} \\
 &+ 2 \alpha^{-4} \int_0^t ds \;  \Im \langle R_s  \left( \psi_s - \widetilde{\psi}_{\varphi_s} \right) , \; R_s V_{\varphi_s + \sigma_{\psi_s}} \widetilde{\psi}_{\varphi_s}\rangle \label{eq:adiab_2,4} \\
  &+ 2 \alpha^{-4} \int_0^t ds \;  \Im \langle R_s \left( \psi_s - \widetilde{\psi}_{\varphi_s} \right) , \; \left(R_s V_{i \varphi_s} \right)^2 \widetilde{\psi}_{\varphi_s}\rangle  \label{eq:adiab_2,5}  .
\end{align}
\end{subequations}
Using Lemma \ref{lemma: Potental} the first term can be estimated by
\begin{align}
\vert \eqref{eq:adiab_2,1} \vert 
&\leq C \alpha^{-2} \| R_t \|^2 \| V_{i \varphi_t} \widetilde{\psi}_{\varphi_t} \|_2   \norm{\psi_t - \widetilde{\psi}_{\varphi_t} }_2
\nonumber \\
&\leq C \alpha^{-2}  \| R_t \|^2 \| \widetilde{\psi}_{\varphi_t} \|_{H^1( \mathbb{R}^3)} \| \varphi_t \|_2  \norm{ \psi_t - \widetilde{\psi}_{\varphi_t}  }_2.
\end{align}
On the one hand Lemma \ref{lemma: well posedness LP} and \eqref{eq: h1 minimizer} show that $\| \varphi_t \|_2$ and $ \| \psi_{\varphi_t} \|_{\Hone}$ are uniformly bounded in time. On the other hand Lemma \ref{lemma: resolvent} implies that the resolvent $R_t$ is bounded for all times $|t| \leq T$, so that we obtain
\begin{align}
\vert \eqref{eq:adiab_2,1} \vert &\leq C \Lambda^{-2} \alpha^{-2}  \norm{ \psi_t - \widetilde{\psi}_{\varphi_t} }_2
\leq \frac{1}{2}  \norm{ \psi_t - \widetilde{\psi}_{\varphi_t}}_2^2 + C \Lambda^{-4} \alpha^{-4} 
, \hspace{0.5cm} \forall \;  |t| \leq T .
\end{align}
Similarly, we bound the second term by
\begin{align}
\vert \eqref{eq:adiab_2,2} \vert &\leq C \alpha^{-2} \int_0^t ds \; \| R_s \| \; \| V_{i \varphi_s} \widetilde{\psi}_{\varphi_s} \|_2 \; \| \dot{R}_s \| 
\norm{ \psi_s - \widetilde{\psi}_{\varphi_s} }_2
 \nonumber \\
&\leq C \alpha^{-4} \Lambda^{-5/2} ( 1 + \Lambda^{-1} )^{1/2}  \int_0^t ds \, \norm{ \psi_s - \widetilde{\psi}_{\varphi_s} }_2
\end{align}
for all $|t| \leq T$, using Lemma \ref{lemma: well posedness LP}, Lemma \ref{lemma: Potental} and Lemma \ref{lemma: resolvent}. 
The third term \eqref{eq:adiab_2,4}  can be bounded using $\| \sigma_{\psi_t} \|_2 \leq C \| \psi_t \|_{\Hone}^2  \leq C$ by Lemma \ref{lemma: well posedness LP}. We find
\begin{align}
\vert \eqref{eq:adiab_2,4} \vert \leq C \Lambda^{-2} \alpha^{-4}  \int_0^t ds \, \norm{ \psi_s - \widetilde{\psi}_{\varphi_s}  }_2,  \hspace{0.5cm} \forall \;  |t| \leq T .
\end{align}
Using the same ideas we estimate the last term by
\begin{align}
\vert \eqref{eq:adiab_2,5} \vert 
&\leq C \alpha^{-4} \int_0^t ds \;  \| R_s \|  \;  \| V_{i \varphi_s} R_s \|^2 
  \norm{ \psi_s - \widetilde{\psi}_{\varphi_s}  }_2
\nonumber \\
&\leq C \Lambda^{- 1} \alpha^{-4} \int_0^t ds \; \| V_{i \varphi_s} R_s \|^2
\norm{\psi_s - \widetilde{\psi}_{\varphi_s} }_2 .
\end{align}
Lemma \ref{lemma: Potental} implies that for all $\psi \in H^1( \mathbb{R}^3)$
\begin{align}
\| V_{i \varphi_s} \psi \|_2 \leq C \| \varphi_s \|_2 \; \| \psi \|_{H^1( \mathbb{R}^3)}
\end{align}
and therefore that $ \| V_{i \varphi_s} (- \Delta +1)^{-1/2} \| \leq C$ for all $|t| \leq T$ by Lemma \ref{lemma: Potental}. Hence
\begin{align}
\| V_{i \varphi_s} R_s \| \leq C \| (- \Delta  +1)^{1/2} R_s \| \leq C \Lambda^{-1/2} ( 1 + \Lambda^{-1})^{1/2},
\end{align}
for all $|t| \leq T$ where we used Lemma \ref{lemma: resolvent} for the last inequality. Thus,
\begin{align}
\vert \eqref{eq:adiab_2,5} \vert \leq C \alpha^{-4} \Lambda^{-2} ( 1 + \Lambda^{-1}) \int_0^t ds \, \norm{\psi_s - \widetilde{\psi}_{\varphi_s}  }_2, \hspace{0.5cm} \forall \;  |t| \leq T, 
\end{align}
In total, this leads to the estimate
\begin{align}
 \| \psi_t - \widetilde{\psi}_{\varphi_t} \|_{2}^2
&\leq 
C \Lambda^{-4} \alpha^{-4} 
+ C  \Lambda^{-2} \left( 1 + \Lambda^{-1}  \right)  \alpha^{-4}\int_0^t ds \, \| \psi_s - \widetilde{\psi}_{\varphi_s} \|_{2} .
\end{align}
A Gronwall type argument  then leads to
\begin{align}
\| \psi_t - \widetilde{\psi}_{\varphi_t} \|_2^2 \leq C \Lambda^{-4} \alpha^{-4} + C \Lambda^{-4} \left( 1 + \Lambda^{-1} \right)^2 \alpha^{-8} |t|^2 ,
\end{align}
completing the proof.
\end{proof}

\section{Accuracy of the Landau-Pekar equations}

\subsection{Preliminaries}

For notational convenience we define
\begin{align}
\Phi_x = \int d^3 k \; |k|^{-1} \left( e^{i k \cdot x}a_k + e^{- ik \cdot x} a_k^* \right) = \Phi_x^+ + \Phi_x^-
\end{align}
with
\begin{align}
\Phi_x^+ = \int d^3 k \; |k|^{-1}  e^{i k \cdot x} a_k, \hspace{0.5cm} \text{and}  \hspace{0.5cm} \Phi_x^- = \int d^3 k \; |k|^{-1}  e^{-i k \cdot x} a_k^*  .
\end{align}
In addition we introduce for $ f \in L^2( \mathbb{R}^3)$ the creation operator $a^*(f)$ and the annihilation operator $a(f)$ which are given by
\begin{align}
a^*(f)= \int d^3k \; f(k) a_k^*, \hspace{0.5cm} a(f) = \int d^3k \; \overline{f(k)} a_k  
\end{align}
and bounded with respect to the number of particles operator $\mathcal{N}= \int d^3k\;  a_k^* a_k$, i.e.
\begin{align}
\label{eq:bounds_a,a*}
\| a(f) \xi \| \leq \| f \|_2 \; \| \mathcal{N}^{1/2} \xi \|, \hspace{0.5cm} \| a^*(f) \xi \| \leq \| f \|_2 \| ( \mathcal{N}+\alpha^{-2})^{1/2} \xi \|
\end{align}
for all $\xi \in \mathcal{F}$.  
Moreover, recall the definition \eqref{eq: definition Weyl operator} of the Weyl operator $W(f) = e^{a^*(f) - a(f)}$.
For a time dependent function $f_t \in L^2( \mathbb{R}^3)$ the time derivative of the Weyl operator is given by
\begin{align}
\label{eq:Weyl_deriv}
\partial_t W(f_t) = \frac{\alpha^{-2}}{2} \left(  \langle f_t, \partial_t f_t\rangle - \langle \partial_t f_t,f_t\rangle \right) W(f_t) + \left( a^*(\partial_t f_t) - a(\partial_t f_t) \right) W(f_t).
\end{align}
The proof of this formula can be found in \cite[Lemma A.3]{frankgang}.

\subsection{Proof of Theorem \ref{theorem: main theorem}}

It should be noted that \eqref{eq: bound main theorem} is valid for all times which are at least of order $\alpha^2$ because both states in the inequality have norm one. To show its validity for shorter times we split the norm difference into two parts by the triangle inequality and use Remark \ref{remark: adiabatic theorem for short times} to estimate
\begin{align}
&\norm{e^{-i H_{\alpha} t} \psi_{\varphi_0} \otimes W(\alpha^2 \varphi_0) \Omega - e^{  - i \int_0^t du \, \omega(u) }   \psi_t \otimes W(\alpha^2 \varphi_t) \Omega}^2
\nonumber \\
&\quad\leq 2 \norm{e^{-i H_{\alpha} t} \psi_{\varphi_0} \otimes W(\alpha^2 \varphi_0) \Omega -  e^{  - i \int_0^t du \, \omega(u) }   \widetilde{\psi}_{\varphi_t} \otimes W(\alpha^2 \varphi_t) \Omega}^2
\nonumber \\
&\quad\quad + 2\norm{ \widetilde{\psi}_{\varphi_t} \otimes W(\alpha^2 \varphi_t) \Omega -  \psi_t \otimes W(\alpha^2 \varphi_t) \Omega}^2
\nonumber \\
&\quad\leq C \alpha^{-2} \abs{t}  + 2 \norm{e^{-i H_{\alpha} t} \psi_{\varphi_0} \otimes W(\alpha^2 \varphi_0) \Omega -  e^{  - i \int_0^t du \, \omega(u) }   \widetilde{\psi}_{\varphi_t} \otimes W(\alpha^2 \varphi_t) \Omega}^2
\end{align}
for all $|t| \leq C_\Lambda \alpha^2$ where we used the notation $\widetilde{\psi}_{\varphi_t} = e^{-i \int_0^t du \; e( \varphi_u)} \psi_{\varphi_t}$. Therefore it remains to estimate the second term.
For this, we introduce
 \begin{align}
\xi_s &= e^{   i \int_0^s du \, \omega(u)}   W^*(\alpha^2 \varphi_s) e^{-i H_{\alpha} s} \psi_{\varphi_0} \otimes W(\alpha^2 \varphi_0) \Omega
\end{align}
to shorten the notation and compute 
\begin{align*}
W^*(\alpha^2 \varphi_t) H_{\alpha} W(\alpha^2 \varphi_t)
&= h_{\varphi_t} + \norm{\varphi_t}_2^2 + \mathcal{N} + \Phi_x + a(\varphi_t) + a^*(\varphi_t) .
\end{align*}
Using \eqref{eq:Weyl_comm}, this leads to
\begin{align}
\label{eq: time derivatice xi-s}
i \partial_s \xi_s 
&= \big( i \partial_s W^*(\alpha^2 \varphi_s) \big) W(\alpha^2 \varphi_s) \xi_s
+ \Big(  W^*(\alpha^2 \varphi_s) H_{\alpha} W(\alpha^2 \varphi_s)
- \omega(s)  \Big)  \xi_s
\nonumber \\
&= \Big( h_{\varphi_s} + \Phi_x - a(\sigma_{\psi_s}) - a^*(\sigma_{\psi_s})
 +  \mathcal{N}   \Big) \xi_s .
\end{align}
Note that $\xi_0 = \psi_{\varphi_0} \otimes \Omega $. 
We apply Duhamel's formula and use the unitarity of the Weyl operator $W(\alpha^2 \varphi_t)$ to compute
\begin{align}
&\norm{e^{-i H_{\alpha} t} \psi_{\varphi_0} \otimes W(\alpha^2 \varphi_0) \Omega - e^{  - i \int_0^t du \, \omega(u) }   \widetilde{\psi}_{\varphi_t} \otimes W(\alpha^2 \varphi_t) \Omega}^2\nonumber \\
 &=  \norm{ \xi_t - \widetilde{\psi}_{\varphi_t} \otimes  \Omega}^2
= \int_0^t ds \, \partial_s \norm{\xi_s - \widetilde{\psi}_{\varphi_s} \otimes  \Omega}^2 = - 2 \Re  \int_0^t ds \,  \partial_s
\scp{ \xi_s}{\widetilde{\psi}_{\varphi_s} \otimes  \Omega} .
\end{align}
Recall that $\Phi_x = \int d^3k |k|^{-1} \left( e^{i k \cdot x} a_k + e^{- i k \cdot x} a_k^*\right)$. 
We obtain
\begin{align}
&\norm{e^{-i H_{\alpha} t} \psi_{\varphi_0} \otimes W(\alpha^2 \varphi_0) \Omega - e^{  - i \int_0^t du \, \omega(u) }  \widetilde{\psi}_{\varphi_t} \otimes W(\alpha^2 \varphi_t) \Omega}^2
\nonumber \\
&= 2 \Im \int_0^t ds \, \Big( \scp{i \partial_s \xi_s}{\widetilde{\psi}_{\varphi_s} \otimes \Omega} - \scp{\xi_s}{i \partial_s \widetilde{\psi}_{\varphi_s} \otimes \Omega}  \Big)
\nonumber \\
&= 2 \Im \int_0^t ds \, \scp{\xi_s}{\Big( h_{\varphi_s} - e(\varphi_s) + \Phi_x
- a(\sigma_{\psi_s}) - a^*(\sigma_{\psi_s})
+  \mathcal{N} \Big) \widetilde{\psi}_{\varphi_s} \otimes \Omega}
\nonumber \\
&\quad - 2 \alpha^{-2} \Re \int_0^t ds \, \scp{\xi_s}{ R_s V_{i \varphi_s} \widetilde{\psi}_{\varphi_s} \otimes \Omega}
\nonumber \\
&= 2 \Im \int_0^t ds \, \scp{\xi_s}{ \big( \Phi_x^- - a^*(\sigma_{\psi_s} ) \big) \widetilde{\psi}_{\varphi_s} \otimes \Omega}
 - 2 \alpha^{-2} \Re \int_0^t ds \, \scp{\xi_s}{ R_s V_{i \varphi_s} \widetilde{\psi}_{\varphi_s} \otimes \Omega}  .
\end{align}
Here we used the definition $R_s = q_s ( h_{\varphi_s} - e( \varphi_s) )^{-1} q_s$ and Lemma \ref{lemma: existence minimizer}.
Thus if we insert the identity $1= p_{s} + q_{s}$ and note 
that $q_s a^*( \sigma_{\psi_s}) \widetilde{\psi}_{\varphi_s} \otimes \Omega = 0$ and $q_s \widetilde{\psi}_{\varphi_s} = 0$,  we get
\begin{subequations}
\begin{align}
&\norm{e^{-i H_{\alpha} t} \psi_{\varphi_0} \otimes W(\alpha^2 \varphi_0) \Omega - e^{  - i \int_0^t du \, \omega(u) }  \widetilde{\psi}_{\varphi_t} \otimes W(\alpha^2 \varphi_t) \Omega}^2
\nonumber \\
\label{eq: gronwall c}
&\qquad  \quad = - 2 \alpha^{-2} \Re \int_0^t ds \, \scp{\xi_s}{ R_sV_{i \varphi_s} \widetilde{\psi}_{\varphi_s} \otimes \Omega} \\
\label{eq: gronwall a}
&\qquad \qquad +2 \Im \int_0^t ds \, \scp{\xi_s }{ p_s \big( \Phi_x^- - a^*(\sigma_{\psi_s} ) \big) \widetilde{\psi}_{\varphi_s} \otimes \Omega}
\\
\label{eq: gronwall b}
&\qquad \qquad + 2 \Im \int_0^t ds \, \scp{\xi_s - \widetilde{\psi}_{\varphi_s} \otimes \Omega}{ q_s \Phi_x^- \widetilde{\psi}_{\varphi_s} \otimes \Omega}.
\end{align}
\end{subequations}

\noindent
We observe that the first term \eqref{eq: gronwall c} is already of the right order, namely $\alpha^{-2} t$. To be more precise,
\begin{align}
\abs{\eqref{eq: gronwall c}}
&\leq 2 \alpha^{-2} \int_0^t ds \, \norm{\xi_s} \norm{R_s} \norm{V_{i \varphi_s}\widetilde{\psi}_{\varphi_s}}_2
\nonumber \\
&\leq 2 \alpha^{-2} \int_0^t ds \| \varphi_s \|_2 \norm{R_s} \norm{\widetilde{\psi}_{\varphi_s}}_{\Hone}
\leq C \alpha^{-2} \abs{t}
\end{align}
for all $|t| \leq C_\Lambda \alpha^2 $ where we used Lemma \ref{lemma: Potental}, Lemma \ref{lemma: resolvent}, Lemma \ref{lemma: well posedness LP} and \eqref{eq: h1 minimizer}. We estimate the remaining two terms \eqref{eq: gronwall a} and \eqref{eq: gronwall b} separately. 
 
\subsection*{The term \eqref{eq: gronwall a} }
We have
\begin{align}
\abs{\eqref{eq: gronwall a}}
&\leq    2  \int_0^t ds \,  \abs{ \scp{\xi_s  }{\int d^3k \, a_k^* \abs{k}^{-1} \Big(  \scp{\widetilde{\psi}_{\varphi_s}}{ e^{- i k \cdot} \widetilde{\psi}_{\varphi_s}} - \scp{\psi_s}{e^{- i k \cdot}\psi_s}  \Big) \widetilde{\psi}_{\varphi_s} \otimes \Omega}  }
\nonumber \\
&\leq 2 \int_0^t ds \,
\norm{ \int d^3k \, a_k^* \abs{k}^{-1} \Big(  \scp{\widetilde{\psi}_{\varphi_s}}{ e^{- i k \cdot} \widetilde{\psi}_{\varphi_s}} - \scp{\psi_s}{e^{- i k \cdot}\psi_s}  \Big) \widetilde{\psi}_{\varphi_s} \otimes \Omega}
\nonumber \\
&= 2 \int_0^t ds \,
\norm{ \int d^3k \, a_k^* \abs{k}^{-1} \Big(  \scp{\widetilde{\psi}_{\varphi_s}}{ e^{- i k \cdot} \big( \widetilde{\psi}_{\varphi_s} - \psi_s \big)} + \scp{\big( \widetilde{\psi}_{\varphi_s} - \psi_s \big)}{e^{- i k \cdot}\psi_s}  \Big) \widetilde{\psi}_{\varphi_s} \otimes \Omega} .
\end{align}
Since $\| a^*( f) \psi \otimes \Omega \| = \alpha^{-1} \| f \|_2 \; \| \psi \|$ for all $f \in L^2( \mathbb{R}^3)$, we find
\begin{align}
\abs{\eqref{eq: gronwall a}}
&\leq C \alpha^{-1} \int_0^t ds \Bigg[ \int d^3k \, \abs{k}^{-2} \Big(  \abs{\scp{\widetilde{\psi}_{\varphi_s}}{ e^{- i k \cdot} \big( \widetilde{\psi}_{\varphi_s} - \psi_s \big)}}^2 + \abs{\scp{\big( \widetilde{\psi}_{\varphi_s} - \psi_s \big)}{e^{- i k \cdot}\psi_s}}^2  \Big) \Bigg]^{1/2} .
\end{align}
With the help of $\widehat{|\cdot|^{-2}} (x) =  \pi^{-1} |x|^{-1}$ and the inequalities of Hardy-Littlewood-Sobolev and H\"older we obtain
\begin{align}
\int d^3k \, \abs{k}^{-2}  & \abs{\scp{\widetilde{\psi}_{\varphi_s}}{ e^{- i k \cdot} \big( \widetilde{\psi}_{\varphi_s} - \psi_s \big)}}^2 
\nonumber \\
&= C \int d^3x \int d^3y \, \abs{x-y}^{-1}  (\widetilde{\psi}_{\varphi_s} - \psi_s)(x) \overline{\widetilde{\psi}_{\varphi_s}(x)}  \widetilde{\psi}_{\varphi_s}(y)   \overline{ (\widetilde{\psi}_{\varphi_s} - \psi_s)(y)} 
\nonumber \\
&\leq C \norm{\widetilde{\psi}_{\varphi_s} (\widetilde{\psi}_{\varphi_s} - \psi_s)}_{6/5}^2 
\leq C \norm{\widetilde{\psi}_{\varphi_s} - \psi_s}_2^2 \norm{\widetilde{\psi}_{\varphi_s}}_3^2
\nonumber \\
& \leq C \norm{\widetilde{\psi}_{\varphi_s} - \psi_s}_2^2 \norm{\widetilde{\psi}_{\varphi_s}}_{H^1( \mathbb{R}^3)}^2 
 \leq  C \norm{\widetilde{\psi}_{\varphi_s} - \psi_s}_2^2 
\end{align}
for all $|t | \leq C_\Lambda \alpha^2$ by \eqref{eq: h1 minimizer}. Similarly,
\begin{align}
\int d^3k \, \abs{k}^{-2}  & \abs{\scp{\big( \widetilde{\psi}_{\varphi_s} - \psi_s \big)}{e^{- i k \cdot}\psi_s}}^2 
\leq C \norm{ \psi_s}_{H^1( \mathbb{R}^3)}^2 \norm{\psi_s - \widetilde{\psi}_{\varphi_s}}_2^2 \leq C \norm{\psi_s - \widetilde{\psi}_{\varphi_s}}_2^2
\end{align}
by Lemma \ref{lemma: well posedness LP}. Hence,
\begin{align}
\abs{\eqref{eq: gronwall a}}
&\leq C \alpha^{-1} \int_0^t ds  \norm{\psi_s - \widetilde{\psi}_{\varphi_s}}_2
\end{align}
for all $|t| \leq C_\Lambda \alpha^2$. Applying Theorem \ref{thm:adiabatic} leads to
\begin{align}
\abs{\eqref{eq: gronwall a}} \leq C \alpha^{-2} \abs{t}  \quad \text{for all} \, \abs{t} \leq C_\Lambda \alpha^2 .
\end{align}

\subsection*{The term \eqref{eq: gronwall b}}

In order to continue we note that \cite[Theorem X.71]{reedsimon}, whose assumptions can easily shown to be satisfied by Lemma \ref{lemma: Potental}, guarantees the existence of a two parameter group $U_{h}(s;\tau)$ on $L^2(\mathbb{R}^3)$ such that 
\begin{align}
\frac{d}{ds} U_h(s; \tau) \psi = - i h_{\varphi_s} U_h(s; \tau)\psi ,
\quad U_h(\tau; \tau) \psi = \psi 
\quad \text{for all} \, \psi \in H^1(\mathbb{R}^3).
\end{align}
Moreover, we define
\begin{align}
\widetilde{U}_{h}(s; \tau) &= e^{ i \int_{\tau}^s du \, e(\varphi_u)} U_h(s; \tau) .
\end{align}
We then have for all $s \in \mathbb{R}$
\begin{align}
\frac{d}{ds} \widetilde{U}_{h}^*(s; \tau)  = \widetilde{U}_{h}^*(s; \tau) i \big(h_{\varphi_s} - e(\varphi_s) \big)
\end{align}
and 
\begin{align}
\label{eq:parts2}
\widetilde{U}^*_{h}(s; \tau) q_s f_s
&= - i \frac{d}{ds}\left[  \widetilde{U}_{h}^*(s; \tau) R_s  f_s\right] + i \widetilde{U}_{h}^*(s; \tau)\dot{R_s} f_s + i  \widetilde{U}_{h}^*(s; \tau) R_s  \partial_s f_s,
\end{align}
for $f_s \in L^2( \mathbb{R}^3)$.
This allows us to express 
\begin{align}
\eqref{eq: gronwall b}
&= 2 \Im \int_0^t ds \, \scp{\xi_s - \widetilde{\psi}_{\varphi_s} \otimes \Omega}{ q_s \Phi_x^- \widetilde{\psi}_{\varphi_s} \otimes \Omega}
\nonumber \\
&= 2 \Im \int_0^t ds \, \scp{ U_h^*(s;0) \big( \xi_s - \widetilde{\psi}_{\varphi_s} \otimes \Omega \big)}{ \widetilde{U}_{h}^*(s; 0) q_s  \Phi_x^- \psi_{\varphi_s} \otimes \Omega}
\end{align}
by three integrals which contain a derivative with respect to the time variable.
Note that we absorbed the phase factor of $\widetilde{\psi}_{\varphi_s}$ in the dynamics $\widetilde{U}_{h}^*(s;0)$. Thus, 
\begin{align}
\abs{\eqref{eq: gronwall b}}
&\leq 2 \abs{\int_0^t ds \, \scp{U_h^*(s;0) \big( \xi_s - \widetilde{\psi}_{\varphi_s} \otimes \Omega \big)}{ \frac{d}{ds}  \left[ {U}_{h}^*(s;0)R_s \Phi_x^- \psi_{\varphi_s} \otimes \Omega\right] }}
\nonumber \\
&\quad 
+ 2 \abs{\int_0^t ds \, \scp{U_h^*(s;0)  \big( \xi_s - \widetilde{\psi}_{\varphi_s} \otimes \Omega \big)}{ \widetilde{U}_{h}^*(s;0)\dot{R}_s \Phi_x^- \psi_{\varphi_s} \otimes \Omega}} 
\nonumber \\
&\quad 
+ 2 \abs{\int_0^t ds \, \scp{U_h^*(s;0)  \big( \xi_s - \widetilde{\psi}_{\varphi_s} \otimes \Omega \big)}{ \widetilde{U}_{h}^*(s;0) R_s \Phi_x^- \partial_s  \psi_{\varphi_s} \otimes \Omega}}  .
\end{align}
In the first term we integrate by parts and we use $\xi_0 = \widetilde{\psi}_{\varphi_0} \otimes \Omega$. We find
\begin{align}
\abs{\eqref{eq: gronwall b}}
&\leq 2 \abs{ \scp{U_h^*(t;0) \big( \xi_t - \widetilde{\psi}_{\varphi_t} \otimes \Omega \big) }{U_h^*(t;0) R_t \Phi_x^- \widetilde{\psi}_{\varphi_t} \otimes \Omega}}
\nonumber \\
&\quad + 2
\abs{ \int_0^t ds \, \scp{\frac{d}{ds}  \left[ U_h^*(s;0) \big( \xi_s - \widetilde{\psi}_{\varphi_s} \otimes \Omega \big) \right] }{U_h^*(s;0) R_s \Phi_x^- \widetilde{\psi}_{\varphi_s} \otimes \Omega}}
\nonumber \\
&\quad + 2
\abs{\int_0^t ds \, \scp{\big( \xi_s - \widetilde{\psi}_{\varphi_s} \otimes \Omega \big)}{\dot{R}_s \Phi_x^- \widetilde{\psi}_{\varphi_s} \otimes \Omega}}
\nonumber \\
&\quad 
+ 2 \abs{\int_0^t ds \, \scp{  \xi_s }{ R_s  \Phi_x^- \partial_s  \psi_{\varphi_s} \otimes \Omega}}  . \label{eq:92_1}
\end{align}
In order to compute the time derivative occurring in the second summand, we use \eqref{eq: time derivatice xi-s} and the notation
\begin{align}
\label{eq: definition of delta H}
\delta H_s  =  \Phi_x - a(\sigma_{\psi_s}) - a^*(\sigma_{\psi_s}) +  \mathcal{N}
\end{align}
and get
\begin{align}
\label{eq:hf1}
\frac{d}{ds} \left[ U_h^*(s;0) \big( \xi_s - \widetilde{\psi}_{\varphi_s} \otimes \Omega \big)\right]
&= - i U_h^*(s;0) \delta H_s \,  \xi_s - \alpha^{-2} U_h^*(s;0) R_s V_{i \varphi_s} \widetilde{\psi}_{\varphi_s} \otimes \Omega
\end{align}
as well as
\begin{align}
\label{eq:hf2}
U_h^*(t;0) \big( \xi_t - \widetilde{\psi}_{\varphi_t} \otimes \Omega \big)
&= - i \int_0^t ds \, U_h^*(s;0)   \delta H_s \, \xi_s
- \alpha^{-2} \int_0^t ds \, U_h^*(s;0) R_s V_{i \varphi_s} \widetilde{\psi}_{\varphi_s} \otimes \Omega.
\end{align}
Applying \eqref{eq:hf2} to the first term of the r.h.s. of \eqref{eq:92_1} and \eqref{eq:hf1}  to the second, we obtain
\begin{subequations}
\begin{align}
\label{eq: gronwall b1}
\abs{\eqref{eq: gronwall b}}
&\leq 2  \abs{\int_0^t ds \, \scp{\delta H_s \, \xi_s}{R_s \Phi_x^- \widetilde{\psi}_{\varphi_s} \otimes \Omega}}
\\
\label{eq: gronwall b2}
&\quad + 2  \abs{\int_0^t ds \, \scp{\delta H_s \, \xi_s}{ U_h^*(t;s) R_t \Phi_x^- \widetilde{\psi}_{\varphi_t} \otimes \Omega}}
\\
\label{eq: gronwall b3}
&\quad + 2 \alpha^{-2}
\abs{\int_0^t ds \, \scp{R_s V_{i \varphi_s} \widetilde{\psi}_{\varphi_s} \otimes \Omega}{ R_s \Phi_x^- \widetilde{\psi}_{\varphi_s} \otimes \Omega}}
\\
\label{eq: gronwall b4}
&\quad + 2 \alpha^{-2}
\abs{\int_0^t ds \, \scp{R_s V_{i \varphi_s} \widetilde{\psi}_{\varphi_s} \otimes \Omega}{U_h^*(t;s) R_t \Phi_x^- \widetilde{\psi}_{\varphi_t} \otimes \Omega}}
\\
\label{eq: gronwall b5}
&\quad + 
2 \abs{\int_0^t ds \, \scp{\big( \xi_s - \widetilde{\psi}_{\varphi_s} \otimes \Omega \big)}{\dot{R}_s \Phi_x^- \widetilde{\psi}_{\varphi_s} \otimes \Omega}}
\\
\label{eq: gronwall b6}
& \quad +
2 \abs{\int_0^t ds \, \scp{ \xi_s }{ R_s  \Phi_x^- \partial_s  \psi_{\varphi_s} \otimes \Omega}}   .
\end{align}
\end{subequations}

\paragraph{The term \eqref{eq: gronwall b1}:}  According to the definition of $\delta H_s$, we decompose  \eqref{eq: gronwall b1} as

\begin{subequations}
\begin{align}
\label{eq: gronwall b1a}
\eqref{eq: gronwall b1}
&\leq 2 \int_0^t ds \, \abs{\scp{\xi_s}{  \mathcal{N} R_s 
\Phi_x^- \widetilde{\psi}_{\varphi_s} \otimes \Omega }}
 \\
\label{eq: gronwall b1b}
&\quad + 2 \int_0^t ds \, \abs{\scp{\xi_s}{\big( a(\sigma_{\psi_s}) + a^*(\sigma_{\psi_s}) \big)  R_s \Phi_x^- \widetilde{\psi}_{\varphi_s} \otimes \Omega }}
 \\
\label{eq: gronwall b1c}
&\quad + 2 \int_0^t ds \, \abs{\scp{\xi_s}{ \Phi_x R_s \Phi_x^- \widetilde{\psi}_{\varphi_s} \otimes \Omega }}.
\end{align}
\end{subequations}
We notice that 
$\big[ \mathcal{N} , R_s \big] =0$ and 
that $\mathcal{N} \Psi = \alpha^{-2} \Psi$ if $\Psi \in \mathcal{H}$ is a one-phonon state
and write the first line as
\begin{align}
\eqref{eq: gronwall b1a}
&= 2 \alpha^{-2} \int_0^t ds \,  \abs{  \scp{\xi_s}{R_s  \Phi_x^- \widetilde{\psi}_{\varphi_s} \otimes \Omega }}
\nonumber \\
&=  2 \alpha^{-2} \int_0^t ds \,  \abs{  \scp{\xi_s}{R_s  \int d^3k \, \abs{k}^{-1} e^{- ikx}  a_k^* \widetilde{\psi}_{\varphi_s} \otimes \Omega }}.
\end{align}
By means of  Lemma \ref{lemma: resolvent} and Lemma \ref{lemma: frankgang 3.1} this becomes
\begin{align}
\eqref{eq: gronwall b1a}
&\leq C \alpha^{-3}  \int_0^t ds \, \norm{\xi_s} \norm{R_s(- \Delta  + 1)^{1/2}} \norm{\psi_{\varphi_s}}_{2}
\leq C \alpha^{-3} \abs{t} 
\end{align}
for all $|t| \leq C_\Lambda \alpha^2$. In a similar way, we calculate $\big[ a(\sigma_{\psi_s}), a_k^*  \big] = \alpha^{-2} \overline{\sigma_{\psi_s}(k)}$ for all $k \in \mathbb{R}^3$ and estimate
\begin{align}
\eqref{eq: gronwall b1b}
&= 2 \int_0^t ds \, \abs{\scp{\xi_s}{\big( a(\sigma_{\psi_s}) + a^*(\sigma_{\psi_s}) \big) R_s \int d^3k \, \abs{k}^{-1} e^{- i k \cdot x} a_k^* \, \widetilde{\psi}_{\varphi_s} \otimes \Omega }}
\nonumber \\
&\leq 2 \int_0^t ds \, \abs{\scp{\xi_s}{R_s \int d^3k \, \abs{k}^{-1} e^{- i k \cdot x} a^*(\sigma_{\psi_s}) a_k^* \widetilde{\psi}_{\varphi_s} \otimes \Omega }}
\nonumber\\
&\quad +  2  \alpha^{-2} \int_0^t ds \, \abs{\scp{\xi_s}{ R_s \int d^3k \, \abs{k}^{-1} e^{- i k \cdot x} \overline{\sigma_{\psi_s}(k)} \, \widetilde{\psi}_{\varphi_s} \otimes \Omega }} .
\end{align}
Applying Lemma \ref{lemma: frankgang 3.1}, Lemma \ref{lemma: Potental} and Lemma \ref{lemma: resolvent} to the first line and using the same arguments as in  Lemma \ref{lemma: frankgang 3.1} for the second line this becomes 
\begin{align}
\eqref{eq: gronwall b1b}
&\leq C \alpha^{-2} \int_0^t ds \, \norm{\xi_s} \norm{R_s (- \Delta  +1)^{1/2}} \norm{\sigma_{\psi_s}}_{2} \norm{\psi_{\varphi_s}}_{2}
\nonumber \\
&\leq C \alpha^{-2} \abs{t}
\quad \text{for all} \, \abs{t} \leq C_\Lambda \alpha^2 .
\end{align}
Since $\Phi_x = \Phi_x^{+} + \Phi_x^{-}$ we have
\begin{subequations}
\begin{align}
\eqref{eq: gronwall b1c}
&= 2 \int_0^t ds \, \abs{\scp{\xi_s}{\Phi_x R_s \Phi_x^- \widetilde{\psi}_{\varphi_s} \otimes \Omega}}
\nonumber \\
\label{eq: gronwall b1c1}
&\leq 2  \int_0^t ds \, \abs{\scp{\xi_s}{\Phi_x^+ R_s \Phi_x^- \widetilde{\psi}_{\varphi_s} \otimes \Omega}}
\\
\label{eq: gronwall b1c2}
&\quad + 2 \int_0^t ds \, \abs{ \scp{\Phi_x^+ \xi_s}{ R_s \Phi_x^- \widetilde{\psi}_{\varphi_s} \otimes \Omega}}.
\end{align}
\end{subequations}
Making use of Lemma \ref{lemma: frank schlein lemma 10} the first line can be estimated by
\begin{align}
\eqref{eq: gronwall b1c1}
&\leq C \int_0^t ds \, \norm{(- \Delta  + 1)^{1/2} \mathcal{N}^{1/2} R_s \Phi_x^- \widetilde{\psi}_{\varphi_s} \otimes \Omega}
\nonumber \\
&= C \int_0^t ds \, \norm{(- \Delta  + 1)^{1/2} R_s \mathcal{N}^{1/2}   \Phi_x^- \widetilde{\psi}_{\varphi_s} \otimes \Omega} .
\end{align}
Since $\| (- \Delta +1)^{1/2} R_s (- \Delta  +1)^{1/2} \| \leq C$ for all $|t| \leq C_\Lambda \alpha^2$ by Lemma \ref{lemma: resolvent} and $\mathcal{N}^{1/2} \Psi = \alpha^{-1} \Psi$ if $\Psi \in \mathcal{H}$ is a one-phonon state, we find 
\begin{align}
\eqref{eq: gronwall b1c1}
&\leq  C \alpha^{-1} \int_0^t ds \,  \norm{  (- \Delta  + 1)^{-1/2}  \Phi_x^- \widetilde{\psi}_{\varphi_s} \otimes \Omega}
\end{align} 
for all $\abs{t} \leq C_\Lambda \alpha^2.$
With Lemma \ref{lemma: frankgang 3.1} we arrive at
\begin{align}
\eqref{eq: gronwall b1c1}
&\leq C \alpha^{-2} \int_0^t ds \, \norm{\psi_{\varphi_s}}_{2}
\leq C \alpha^{-2} \abs{t}
\end{align}
for all $\abs{t} <C_\Lambda \alpha^2.$
In similar fashion we use  Lemma \ref{lemma: frank schlein lemma 10},  Lemma \ref{lemma: frankgang 3.1}  and $\mathcal{N} R_s \Phi_x^- \widetilde{\psi}_{\varphi_s} \otimes \Omega = \alpha^{-2} R_s \Phi_x^- \widetilde{\psi}_{\varphi_s} \otimes \Omega$ to estimate
\begin{align}
\label{eq: estimates on the difficult term 1}
\eqref{eq: gronwall b1c2}
&=  2 \int_0^t ds \, \abs{ \scp{\Phi_x^+ \xi_s}{ R_s \Phi_x^- \widetilde{\psi}_{\varphi_s} \otimes \Omega}}
\nonumber \\
&=  2 \int_0^t ds \, \abs{ \scp{ (\mathcal{N} + \alpha^{-2})^{-1/2}  \Phi_x^+ \xi_s}{ (\mathcal{N} + \alpha^{-2})^{1/2} R_s \Phi_x^- \widetilde{\psi}_{\varphi_s} \otimes \Omega}}
\nonumber \\
&\leq 2 \int_0^t ds \, \norm{ (\mathcal{N} + \alpha^{-2})^{-1/2}  \Phi_x^+ \xi_s}
\norm{(\mathcal{N} + \alpha^{-2})^{1/2} R_s \Phi_x^- \widetilde{\psi}_{\varphi_s} \otimes \Omega}
\nonumber \\
&\leq C \alpha^{-1}
\int_0^t ds \, \norm{(- \Delta  +1)^{1/2} \xi_s} \norm{R_s \Phi_x^- \widetilde{\psi}_{\varphi_s} \otimes \Omega}
\nonumber \\
&\leq C \alpha^{-1}
\int_0^t ds \, \norm{(- \Delta  +1)^{1/2} \xi_s}
\norm{R_s (- \Delta  +1)^{1/2}}
\norm{(- \Delta   + 1)^{-1/2} \Phi_x^- \widetilde{\psi}_{\varphi_s} \otimes \Omega}
\nonumber \\
&\leq C \alpha^{-2} \int_0^t ds \, \norm{(- \Delta  +1)^{1/2} \xi_s} \norm{\psi_{\varphi_s}}_{2}
\nonumber \\
&= C   \alpha^{-2} \int_0^t ds \, \norm{(- \Delta  +1)^{1/2} e^{- i H_{\alpha} s} \psi_{\varphi_0} \otimes W(\alpha^2 \varphi_0) \Omega}
\end{align}
for all  $\abs{t} < C_{\Lambda} \alpha^2$. 
Thus, if we now use $- \Delta  + 1 \leq C (H_{\alpha} + C)$ (see Lemma \ref{lemma: bound for the Froehlich Hamiltonian})  this becomes using the properties \eqref{eq:Weyl_comm} of the Weyl operators 
\begin{align}
\label{eq: estimates on the difficult term 2}
\eqref{eq: gronwall b1c2}
&\leq C \alpha^{-2} \int_0^t ds \, \norm{(H_\alpha + C )^{1/2} e^{- i H_\alpha s} \psi_{\varphi_0} \otimes W( \alpha^2 \varphi_0)  \Omega}\notag \\
&=  C \alpha^{-2} \int_0^t ds \, \norm{(H_\alpha + C)^{1/2}  \psi_{\varphi_0} \otimes W( \alpha^2 \varphi_0 ) \Omega}
\nonumber \\
&= C \alpha^{-2} \int_0^t ds \, \big(  e(\varphi_0) + \norm{\varphi_0}^2 + 1  \big)^{1/2}
= C \alpha^{-2} \abs{t}
\end{align}
for all $\abs{t} < C_\Lambda\alpha^2$.
In total, we obtain $\eqref{eq: gronwall b1c} \leq C \alpha^{-2} \abs{t} $ and hence
$\eqref{eq: gronwall b1} \leq C \alpha^{-2} \abs{t}$ for all $\abs{t} < C_\Lambda\alpha^2$.

\paragraph{The term \eqref{eq: gronwall b2}:}

For the next estimate, we recall the notation \eqref{eq: definition of delta H} to write \eqref{eq: gronwall b2} as
\begin{align}
\eqref{eq: gronwall b2}
&\leq 
2  \int_0^t ds \, \abs{ \scp{ \xi_s}{ \mathcal{N} U_h^*(t;s) R_t \Phi_x^- \widetilde{\psi}_{\varphi_t} \otimes \Omega}}
\nonumber \\
&\quad + 2  \int_0^t ds \, \abs{ \scp{  \xi_s}{ \big(  a(\sigma_{\psi_s}) + a^*(\sigma_{\psi_s} ) \big) U_h^*(t;s)  R_t \Phi_x^- \widetilde{\psi}_{\varphi_t} \otimes \Omega}}
\nonumber \\
&\quad + 2  \int_0^t ds \, \abs{ \scp{\xi_s}{ \Phi_x U_h^*(t;s) R_t \Phi_x^- \widetilde{\psi}_{\varphi_t} \otimes \Omega}} .
\end{align}
Using
$[\mathcal{N}, U_h^*(t;s)] =  [ a(\sigma_{\psi_s}), U_h^*(t;s)] = [a^*(\sigma_{\psi_s}), U_h^*(t;s)] = 0$
allows us to estimate the first two lines in exactly the same way as \eqref{eq: gronwall b1a} and \eqref{eq: gronwall b1b}
and leaves us with
\begin{align}
\eqref{eq: gronwall b2} 
 &\leq C \alpha^{-2} \abs{t} 
 +  2  \int_0^t ds \, \abs{ \scp{\xi_s}{ \Phi_x U_h^*(t;s) R_t \Phi_x^- \widetilde{\psi}_{\varphi_t} \otimes \Omega}}
\end{align}
for all $\abs{t} < C_\Lambda \alpha^2$.
The difficulty of this term is the fact that the operators $\Phi_x$ and $U_h^*(t;s)$ do not commute. 
Nevertheless, we can use $\Phi_x = \Phi_x^+ + \Phi_x^-$ to get
\begin{subequations}
\begin{align}
\eqref{eq: gronwall b2} 
 &\leq C \alpha^{-2} \abs{t} 
 +  2  \int_0^t ds \, \abs{\scp{\Phi_x^+ \xi_s}{  U_h^*(t;s) R_t \Phi_x^- \widetilde{\psi}_{\varphi_t} \otimes \Omega}}
\\
\label{eq: gronwall b2b} 
&\quad + 2  \int_0^t ds \, \abs{ \scp{\xi_s}{ \Phi_x^+ U_h^*(t;s) R_t \Phi_x^- \widetilde{\psi}_{\varphi_t} \otimes \Omega}}.
\end{align}
\end{subequations}
Using the same estimates as in \eqref{eq: estimates on the difficult term 1} and  \eqref{eq: estimates on the difficult term 2} we bound the first integral by
\begin{align}
2  \int_0^t ds \, &\abs{\scp{\Phi_x^+ \xi_s}{  U_h^*(t;s)  R_t \Phi_x^- \tilde{\psi}_{\varphi_t} \otimes \Omega}}
\nonumber \\
&= 2  \int_0^t ds \, \abs{\scp{(\mathcal{N} + \alpha^{-2})^{-1/2} \Phi_x^+ \xi_s}{  U_h^*(t;s)  (\mathcal{N} + \alpha^{-2})^{1/2} R_t \Phi_x^- \tilde{\psi}_{\varphi_t} \otimes \Omega}}
\nonumber \\
&\leq 2 \int_0^t ds \, \norm{(\mathcal{N} + \alpha^{-2})^{-1/2} \Phi_x^+ \xi_s}
\norm{(\mathcal{N} + \alpha^{-2})^{1/2} R_t \Phi_x^- \tilde{\psi}_{\varphi_t} \otimes \Omega}
\nonumber \\
&\leq C \alpha^{-2} \abs{t} 
\quad \text{for all} \, \abs{t} \leq C_{\Lambda} \alpha^2.
\end{align}
For the second term Lemma \ref{lemma: frank schlein lemma 10} and $U_h^*(t;s) = U_h(s;t)$ imply
\begin{align}
\eqref{eq: gronwall b2b} 
&\leq C  \int_0^t ds \, \norm{ (- \Delta  + 1)^{1/2} \mathcal{N}^{1/2} U_h(s;t) R_t \Phi_x^- \tilde{\psi}_{\varphi_t} \otimes \Omega} .
\end{align}
It follows from Lemma \ref{lemma: Potental} and \eqref{eq: time derivative of the potential} that for $\xi \in L^2( \mathbb{R}^3) \otimes \mathcal{F}$ 
\begin{align}
\langle \xi, U^*_h(s;\tau) (- \Delta +1) U_h(s;\tau) \xi \rangle \leq& C \langle \xi, U^*_h(s;\tau) ( h_{\varphi_s} + 1) U_h(s;\tau) \xi \rangle \notag \\
=& C \langle \xi, ( h_{\varphi_{\tau}} + 1) \xi \rangle  - \alpha^{-2} \int_\tau^s d\tau ' \; \langle \xi, U^*_h(\tau ';\tau) V_{i \varphi_{\tau '}} U_h(\tau ';\tau) \xi \rangle\notag \\
\leq& C \langle \xi, (- \Delta  +1)\xi \rangle  + \alpha^{-2} \int_\tau^s d\tau ' \; \langle \xi, U^*_h(\tau ';\tau) (- \Delta  +1) U_h(\tau ';\tau) \xi \rangle.
\end{align}
The Gronwall inequality yields 
\begin{align}
\| (- \Delta +1)^{1/2}  U_h(s;\tau )   \xi \| \leq C e^{ \alpha^{-2} |s - \tau|} \| (- \Delta  +1)^{1/2}\xi \| \leq C\| (- \Delta +1)^{1/2}\xi \| 
\end{align}
for all $|s - \tau| \leq C_\Lambda \alpha^2$. Thus
\begin{align}
\eqref{eq: gronwall b2b} 
&\leq C  \int_0^t ds \, \norm{ (- \Delta  + 1)^{1/2} \mathcal{N}^{1/2} R_t \Phi_x^- \widetilde{\psi}_{\varphi_t} \otimes \Omega} \leq C \alpha^{-2} |t|
\end{align}
for all $|t| \leq C_\Lambda \alpha^{-2}$, where we concluded by Lemma \ref{lemma: frankgang 3.1} and Lemma \ref{lemma: resolvent} as for the term \eqref{eq: gronwall b1c1}.

\paragraph{The terms \eqref{eq: gronwall b3} and \eqref{eq: gronwall b4}:}

With the help of Lemma \ref{lemma: Potental}, Lemma \ref{lemma: frankgang 3.1}, Lemma \ref{lemma: resolvent} and \eqref{eq: h1 minimizer} one obtains
\begin{align}
\eqref{eq: gronwall b3}
&= 2 \alpha^{-2} \abs{  \int_0^t ds \, \scp{R_s V_{i \varphi_s} \widetilde{\psi}_{\varphi_s} \otimes \Omega}{ R_s \int d^3k \, \abs{k}^{-1}  e^{- ik \cdot x} a_k^* \widetilde{\psi}_{\varphi_s} \otimes \Omega}}
\nonumber \\
&\leq 2 \alpha^{-2} \int_0^t ds \, 
\norm{R_s} \norm{\psi_{\varphi_s}}_{H^1( \mathbb{R}^3)} \norm{R_s (- \Delta  +1)^{1/2}}  \norm{(- \Delta  + 1)^{-1/2} \int d^3k \, \abs{k}^{-1} e^{- ikx} a_k^* \widetilde{\psi}_{\varphi_s} \otimes \Omega}
\nonumber \\
&\leq C \alpha^{-3} \abs{t}
\end{align}
and 
$
\eqref{eq: gronwall b4} \leq C \alpha^{-3} \abs{t}
$
for all $\abs{t} < C_\Lambda\alpha^2$.

\paragraph{The term \eqref{eq: gronwall b5}:}
Applying Lemma \ref{lemma: frankgang 3.1}   once more we estimate 

\begin{align}
\eqref{eq: gronwall b5}
&= 2  \abs{ \int_0^t ds \, 
\scp{\big( \xi_s - \widetilde{\psi}_{\varphi_s} \otimes \Omega \big)}{ \dot{R}_s
\int d^3k \, \abs{k}^{-1}  e^{- ik \cdot x} a_k^* \,  \widetilde{\psi}_{\varphi_s} \otimes \Omega}}
\nonumber \\
&\leq 4 \int_0^t ds \, \norm{\dot{R}_s\big( - \Delta  + 1 \big)^{1/2}}
\norm{( - \Delta  + 1 )^{-1/2} \int d^3k \, \abs{k}^{-1}  e^{- ik \cdot x} a_k^* \,  \widetilde{\psi}_{\varphi_s} \otimes \Omega}  .
\end{align}
From \eqref{eq: h1 minimizer},  \eqref{eq: derivative rho} and Lemma  \ref{lemma: resolvent} we get
\begin{align}
\eqref{eq: gronwall b5}
&\leq C \alpha^{-3} \abs{t} 
\quad  \text{for all} \, \abs{t} < C_\Lambda\alpha^2.
\end{align}

\paragraph{The term \eqref{eq: gronwall b6}:} With the help of Lemma \ref{lemma: existence minimizer} and Lemma \ref{lemma: Potental} we get

\begin{align}
\eqref{eq: gronwall b6} &= 2 \alpha^{-2}\abs{\int_0^t ds \, \scp{ \xi_s }{R_s  \Phi_x^- R_sV_{i \varphi_s} \psi_{\varphi_s} \otimes \Omega}}  
\nonumber \\
&\leq 2 \alpha^{-2}
\int_0^t ds \,
\norm{R_s(- \Delta  +1)^{1/2}}
\norm{(- \Delta  +1 )^{-1/2} \Phi_x^- R_s V_{i \varphi_s} \psi_{\varphi_s} \otimes \Omega} 
\nonumber \\
&\leq C \alpha^{-3} \int_0^t ds \; \norm{R_s(- \Delta  +1)^{1/2}} \| R_sV_{i \varphi_s} \| \; \| \psi_{\varphi_s} \|_2 \leq C \alpha^{-3 } |t| .
\end{align}
Here we used again Lemma \ref{lemma: frankgang 3.1} and Lemma \ref{lemma: resolvent}. 
In total, we obtain
\begin{align}
\abs{\eqref{eq: gronwall b}}
&\leq C \alpha^{-2} \abs{t} 
\quad \text{for all} \, \abs{t} <  C_\Lambda \alpha^2.
\end{align}
Summing up, we have shown that
\begin{align}
&\norm{e^{-i H_{\alpha} t} \psi_{\varphi_0} \otimes W(\alpha^2 \varphi_0) \Omega - e^{  - i \int_0^t du \, \omega(u) }  \widetilde{\psi}_{\varphi_t} \otimes W(\alpha^2 \varphi_t) \Omega}^2 \leq C \alpha^{-2 } |t|  ,
\end{align}
for all $|t| \leq C_\Lambda \alpha^2$.

\appendix

\section{Auxiliary estimates}

\begin{lemma}
\label{lemma: frankgang 3.1}
There exists a constant $C>0$ such that for all $u \in L^2(\mathbb{R}^3)$ and $f \in L^2(\mathbb{R}^3)$
\begin{align}
\norm{( - \Delta + 1)^{-1/2} \int d^3k \, \abs{k}^{-1} e^{- ik \cdot x} a_k^* u \otimes \Omega  } \leq C \alpha^{-1} \norm{u}_{2} ,
\\
\norm{(- \Delta + 1)^{-1/2} \int d^3k \, \abs{k}^{-1} e^{- ik \cdot x}  a^*(f) a_k^* u \otimes \Omega  } \leq C \alpha^{-2} \norm{u}_{2} \norm{f}_2 .
\end{align}
\end{lemma}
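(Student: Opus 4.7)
The plan is to work in the electron momentum representation. Since $\Phi_x^- u \otimes \Omega = \int d^3k \, |k|^{-1} e^{-ik\cdot x} u(x) \otimes a_k^* \Omega$ lies entirely in the one-phonon sector of $\mathcal{F}$, the canonical commutation relations imply that a state of the form $\int d^3k\, g(\cdot, k) \otimes a_k^* \Omega$ has squared norm $\alpha^{-2} \int d^3x \int d^3k\, |g(x,k)|^2$. Applied with $g(x,k) = |k|^{-1} e^{-ik\cdot x} u(x)$, after inserting the resolvent $(-\Delta + 1)^{-1/2}$ (which acts only on the electron variable, commutes with the phonon creation operators, and preserves the one-phonon sector), Plancherel's theorem in $x$ yields
\begin{align}
\norm{(-\Delta+1)^{-1/2} \Phi_x^- u \otimes \Omega}^2 = \alpha^{-2} \int d^3p \int d^3k \, \frac{|\hat u(p+k)|^2}{|k|^2 (|p|^2+1)} ,
\end{align}
with $\hat u$ denoting the Fourier transform of $u$. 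Changing variables $q = p+k$ in the $p$ integration separates the $q$- and $k$-integrals, reducing the problem to the uniform estimate
\begin{align}
I(q) := \int d^3k \, \frac{1}{|k|^2 (|q-k|^2+1)} \leq C .
\end{align}

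The main technical point is precisely this uniform bound on $I(q)$. Splitting into $|k| \leq 1$ and $|k| > 1$ handles it cleanly: the singularity $|k|^{-2}$ is integrable at the origin in three dimensions, while on $|k| > 1$ the integrand decays as $|k|^{-4}$ at infinity and the remaining convolution is uniformly bounded in $q$. (A Feynman-parameter computation in fact gives $I(q) = \pi^2 \int_0^1 ((1-\alpha)(\alpha|q|^2 + 1))^{-1/2}\, d\alpha$, which is bounded uniformly in $q$ and of order $|q|^{-1}$ for large $|q|$.) Plugging this bound back yields the first inequality.

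For the second inequality, the plan is to reduce to the first by commutation. Since $a^*(f) = \int d^3k\, f(k) a_k^*$ acts only on the Fock factor, it commutes with $-\Delta$, with the multiplication operator $e^{-ik\cdot x}$, and with every $a_k^*$. Therefore
\begin{align}
(-\Delta+1)^{-1/2} \int d^3k \, |k|^{-1} e^{-ik\cdot x} a^*(f) a_k^* u \otimes \Omega = a^*(f) \, (-\Delta+1)^{-1/2} \Phi_x^- u \otimes \Omega .
\end{align}
The standard bound $\norm{a^*(f) \xi} \leq \norm{f}_2 \norm{(\mathcal{N}+\alpha^{-2})^{1/2} \xi}$ from \eqref{eq:bounds_a,a*} applies. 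Since $(-\Delta+1)^{-1/2}\Phi_x^- u \otimes \Omega$ lies in the one-phonon sector, $\mathcal{N}$ acts there as multiplication by $\alpha^{-2}$, and $(\mathcal{N}+\alpha^{-2})^{1/2}$ contributes a factor $\sqrt{2}\,\alpha^{-1}$. Combined with the first inequality this produces the claimed $C\alpha^{-2} \norm{u}_2 \norm{f}_2$. The principal obstacle throughout is the uniform convolution bound on $I(q)$; the rest is Plancherel together with bookkeeping for the $\alpha$-scaled commutation relations.
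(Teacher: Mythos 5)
Your argument is correct and follows essentially the same route as the paper: Plancherel in the electron variable reduces the first bound to the uniform estimate $\sup_q \int d^3k\, |k|^{-2}(|q-k|^2+1)^{-1} < \infty$, and the second bound follows by commuting $a^*(f)$ through and using \eqref{eq:bounds_a,a*} on the one-phonon sector. The only (cosmetic) difference is that the paper obtains the uniform convolution bound via the rearrangement inequality for symmetric decreasing functions, whereas you split the $k$-integral into $|k|\leq 1$ and $|k|>1$ (and confirm it by an explicit Feynman-parameter computation); both are valid.
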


\begin{proof}
The commutation relations imply
\begin{align}
\| ( - \Delta + 1)^{-1/2} &\int d^3k \; |k|^{-1} e^{- ik \cdot x} a_k^* u \otimes \Omega \|^2 \notag\\
=& \int d^3k \int d^3k' \; |k|^{-1} |k'|^{-1} \langle e^{- ik \cdot x} a_k^* u \otimes \Omega , ( - \Delta + 1)^{-1} e^{- ik' \cdot x} a_{k'}^* u \otimes \Omega \rangle \notag\\
=& \alpha^{-2} \int d^3k \;  |k|^{-2} \langle e^{- ik \cdot x} u , (  - \Delta  +1)^{-1} e^{- ik \cdot x} u \rangle \notag\\
=& \alpha^{-2} \int d^3k \;  |k|^{-2} \langle u , ( (- i \nabla -k)^2 +1)^{-1}  u \rangle\notag \\
=&  \alpha^{-2} \int d^3p\;  | \hat{u}(p)|^2 \int d^3k \;   \frac{1}{((p-k)^2+1) |k|^2} .
\end{align}
Since $| \cdot |^{-2}$ and $(|\cdot|^2 +1)^{-1}$ are radial symmetric and decreasing functions we have
\begin{align}
\sup_{p \in \mathbb{R}^3} \int d^3k \;   \frac{1}{((p-k)^2+1) |k|^2}
&= \int d^3k \;   \frac{1}{(k^2+1) |k|^2} < \infty
\end{align}
by the rearrangement inequality.
Hence,
\begin{align}
\| (- \Delta +  1)^{-1/2} &\int d^3 |k|^{-1} e^{- ik \cdot x} a_k^* u \otimes \Omega \|^2 \leq C \alpha^{-2} \int d^3p \; | \hat{u}(p) |^2 = C \alpha^{-2} \| u \|_2 .
\end{align}
The second bound of the Lemma follows from the first one and the bounds \eqref{eq:bounds_a,a*} for the creation and annihilation operators.
\end{proof}

\begin{lemma}[Lemma 4, Lemma 10 in \cite{frankschlein}]
\label{lemma: frank schlein lemma 10}
Let $\Phi_x^+ = \int d^3k \, \abs{k}^{-1}  e^{ik \cdot x} a_k$ and $\mathcal{N} = \int d^3k \, a^*_k a_k$. Then
\begin{align}
\norm{\Phi_x^+ \Psi} &\leq  C \norm{(- \Delta  +1)^{1/2} \mathcal{N}^{1/2} \Psi}
\quad \text{and} \quad
\norm{(\mathcal{N} +\alpha^{-2})^{-1/2} \Phi_x^+ \Psi} \leq C \norm{(- \Delta  + 1)^{1/2} \Psi}.
\end{align}
\end{lemma}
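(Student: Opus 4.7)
I would prove both bounds sector by sector in the phonon number. Decompose $\Psi = \sum_{n \geq 0} \Psi_n$ where $\Psi_n \in L^2(\mathbb{R}^3) \otimes L^2_{\mathrm{sym}}(\mathbb{R}^{3n})$ is the $n$-phonon component, with values $\Psi_n(x;k_1,\dots,k_n)$. Since $\Phi_x^+$ lowers the phonon number by one and both $-\Delta$ and $\mathcal{N}$ preserve it, I can work in one sector at a time. Using the action of $a_k$, one checks that
\begin{align}
(\Phi_x^+ \Psi)_{n-1}(x;k_1,\dots,k_{n-1}) = \alpha^{-1} \sqrt{n} \int d^3k \, |k|^{-1} e^{ik \cdot x} \Psi_n(x;k,k_1,\dots,k_{n-1}).
\end{align}

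Next I would take the Fourier transform with respect to the electron variable $x$. The exponential $e^{ik \cdot x}$ becomes a momentum shift, so that
\begin{align}
\widehat{(\Phi_x^+ \Psi)}_{n-1}(p;k_1,\dots,k_{n-1}) = \alpha^{-1}\sqrt{n} \int d^3k \, |k|^{-1} \widehat{\Psi}_n(p-k;k,k_1,\dots,k_{n-1}).
\end{align}
Now apply Cauchy--Schwarz in the $k$ integration with weight $((p-k)^2+1)^{-1}$:
\begin{align}
\Big| \int d^3k \, |k|^{-1} \widehat{\Psi}_n(p-k;k,\dots) \Big|^2 \leq \Big( \int \frac{d^3k}{|k|^2 ((p-k)^2+1)} \Big) \int d^3k \, ((p-k)^2+1) |\widehat{\Psi}_n(p-k;k,\dots)|^2.
\end{align}
The prefactor is uniformly bounded in $p$: both $|k|^{-2}$ and $((p-k)^2+1)^{-1}$ are radial and decreasing in $|k|$ (resp.\ $|p-k|$), so the rearrangement inequality shows the supremum is attained at $p=0$, yielding the finite integral $\int |k|^{-2}(k^2+1)^{-1} d^3k$ (the same step used in the proof of Lemma \ref{lemma: frankgang 3.1}).

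With this control, I integrate over $p$ and the remaining phonon variables, then change variables $p' = p-k$ in the second factor. The integrand becomes $(p'^2+1) |\widehat{\Psi}_n(p';k,k_1,\dots,k_{n-1})|^2$, whose integral over all the phonon momenta and $p'$ is exactly $\langle \Psi_n, (-\Delta + 1) \Psi_n\rangle$. Collecting:
\begin{align}
\| \Phi_x^+ \Psi_n \|^2 \leq C \alpha^{-2} n \, \langle \Psi_n, (-\Delta + 1)\Psi_n\rangle = C \langle \Psi_n, (-\Delta + 1)\mathcal{N} \Psi_n\rangle,
\end{align}
since $\mathcal{N} = \alpha^{-2} n$ on the $n$-phonon sector. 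Summing over $n$ using orthogonality of the sectors yields the first inequality. For the second, note that $(\mathcal{N}+\alpha^{-2})^{-1/2}$ acts as $\alpha/\sqrt{n}$ on the $(n-1)$-phonon sector, which precisely cancels the $\alpha^{-1}\sqrt{n}$ prefactor obtained from applying $\Phi_x^+$ to the $n$-phonon sector, eliminating the $\mathcal{N}^{1/2}$ on the right-hand side.

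The only real technical point is the uniform bound on $\int |k|^{-2}((p-k)^2+1)^{-1} d^3k$ via rearrangement; everything else is a bookkeeping of Fourier transforms and Fock-space combinatorics.
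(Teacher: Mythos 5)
Your argument is correct, but it follows a different route from the paper. The paper does not prove these bounds from scratch: it splits $\Phi_x^+ = \Phi_x^{+,>} + \Phi_x^{+,<}$ at a fixed momentum cutoff $\kappa$, quotes Lemma 10 of \cite{frankschlein} for the high-momentum part (where the $(-\Delta+1)^{1/2}$ weight is needed, via the Lieb--Yamazaki type argument) and Lemma 4 of \cite{frankschlein} for the low-momentum part (where $|k|^{-1}\chi_{|k|\leq\kappa}\in L^2$ so the standard bound \eqref{eq:bounds_a,a*} suffices). You instead give a self-contained proof: sector decomposition in Fock space, Fourier transform in the electron variable, Cauchy--Schwarz with the weight $((p-k)^2+1)^{-1}$, and the rearrangement bound $\sup_p\int |k|^{-2}((p-k)^2+1)^{-1}\,d^3k = \int |k|^{-2}(k^2+1)^{-1}\,d^3k<\infty$ --- which is exactly the mechanism the paper itself uses to prove Lemma \ref{lemma: frankgang 3.1}, here transplanted to a general $\Psi$ rather than a product state $u\otimes\Omega$. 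Your single weight handles the ultraviolet and infrared regimes simultaneously, so no momentum splitting is needed; the bookkeeping of the $\alpha^{-1}\sqrt{n}$ factor from the rescaled commutation relations $[a_k,a_{k'}^*]=\alpha^{-2}\delta(k-k')$ against $\mathcal{N}=\alpha^{-2}n$ and $(\mathcal{N}+\alpha^{-2})^{-1/2}=\alpha n^{-1/2}$ on the relevant sectors is exactly right in both inequalities. What the paper's approach buys is brevity (two citations); what yours buys is a uniform, self-contained treatment that makes transparent why the same constant works for both estimates.
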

\begin{proof}
We split the operator $\Phi_x^+ = \Phi_x^{+, >} +\Phi_x^{+, <}  $, where 
\begin{align}
\Phi_x^{+, >} = \int_{|k|> \kappa} \frac{d^3k}{|k|} e^{ik \cdot x}  a_k, \hspace{0.5cm}\Phi_x^{+, <} = \int_{|k|< \kappa} \frac{d^3k}{|k|}  e^{ik \cdot x} a_k
\end{align}
for a constant $\kappa >0$ of order one. Then, we deduce from \cite[Lemma 10]{frankschlein} that 
\begin{align}
\norm{\Phi_x^{+,>} \Psi} &\leq  C \norm{(- \Delta  +1)^{1/2} \mathcal{N}^{1/2} \Psi}
\quad \text{and} \quad
\norm{(\mathcal{N} +\alpha^{-2})^{-1/2} \Phi_x^{+,>} \Psi} \leq C \norm{(- \Delta  + 1)^{1/2} \Psi} ,
\end{align}
where the constant $C>0$ depends only on $\kappa$. Since the function $f_< : \mathbb{R}^3 \rightarrow \mathbb{R}$ given through  $f_<(k) =|k| \chi_{|k| \leq \kappa }$ is in $L^2( \mathbb{R}^3)$, \cite[Lemma 4]{frankschlein} implies
\begin{align}
\norm{\Phi_x^{+,<} \Psi} &\leq  C \norm{\mathcal{N}^{1/2} \Psi}
\quad \text{and} \quad
\norm{(\mathcal{N} +\alpha^{-2})^{-1/2} \Phi_x^{+,<} \Psi} \leq C \norm{ \Psi}  .
\end{align}
for a constant $C>0$ depending only on $\kappa$. 
\end{proof}
%
%%%%%%%%%%%%%%%%%%%%%%%%%%%%%%%%%%%
%
\begin{lemma}[\cite{frankschlein}, p.7]
\label{lemma: bound for the Froehlich Hamiltonian}
Let $\alpha_0 >0$. Let $H_\alpha$ denote the Fr\"ohlich Hamiltonian defined in \eqref{eq: Froehlich Hamiltonian} and $\varepsilon >0$. There exists a constant $C_\varepsilon$ (depending on $\alpha_0$) , such that
\begin{align}
 ( 1-\varepsilon) ( - \Delta  + \mathcal{N} ) - C_\varepsilon \leq  H_\alpha  \leq ( 1+\varepsilon) ( - \Delta  + \mathcal{N} ) +C_\varepsilon .
\end{align}
for all $\alpha \geq \alpha_0$. 
\end{lemma}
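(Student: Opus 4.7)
The plan is to reduce the two-sided bound to an infinitesimal form estimate on the interaction term. Writing
\[
H_\alpha = (-\Delta) + \mathcal{N} + \Phi_x, \qquad \Phi_x = \int d^3k\,|k|^{-1}\bigl(e^{ik\cdot x}a_k + e^{-ik\cdot x}a_k^*\bigr),
\]
the asserted inequality is equivalent to the quadratic form bound $\pm\Phi_x \leq \varepsilon(-\Delta + \mathcal{N}) + C_\varepsilon$ on $\mathcal{H}$, uniformly in $\alpha \geq \alpha_0$. I would prove this by the standard UV-splitting argument: choose a momentum cutoff $\kappa>0$ and decompose $\Phi_x = \Phi_x^{<} + \Phi_x^{>}$ according to whether $|k|<\kappa$ or $|k|>\kappa$.

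For the infrared part, the truncated coupling $f_x^{<}(k) = |k|^{-1}\chi_{|k|<\kappa}e^{-ik\cdot x}$ lies in $L^2(\mathbb{R}^3)$ with $\|f_x^{<}\|_2^2 = 4\pi\kappa$ uniformly in the electron variable $x$. The elementary bounds \eqref{eq:bounds_a,a*} give $\|a(f_x^{<})\Psi\| \leq (4\pi\kappa)^{1/2}\|\mathcal{N}^{1/2}\Psi\|$ and $\|a^*(f_x^{<})\Psi\| \leq (4\pi\kappa)^{1/2}\|(\mathcal{N}+\alpha^{-2})^{1/2}\Psi\|$, so a weighted Cauchy-Schwarz with parameter $\delta>0$ yields
\[
\bigl|\langle\Psi,\Phi_x^{<}\Psi\rangle\bigr| \leq \delta\,\langle\Psi,\mathcal{N}\Psi\rangle + C(\kappa,\delta,\alpha_0)\,\|\Psi\|^2,
\]
and in particular the IR part is controlled by $\delta\mathcal{N}$ plus a constant once $\kappa$ and $\delta$ have been fixed.

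For the ultraviolet part, $|k|^{-1}\chi_{|k|>\kappa}$ is no longer square integrable and the electron kinetic energy must provide the regularization. The key input is the cutoff version of Lemma \ref{lemma: frank schlein lemma 10},
\[
\bigl\|(\mathcal{N}+\alpha^{-2})^{-1/2}\Phi_x^{+,>}\Psi\bigr\| \leq C_\kappa\,\|(-\Delta+1)^{1/2}\Psi\|,
\]
whose constant is controlled, via the same momentum-space computation as in Lemma \ref{lemma: frankgang 3.1}, by $\sup_{q\in\mathbb{R}^3}\int_{|k|>\kappa}|k|^{-2}(|k-q|^2+1)^{-1}dk$; a careful case analysis (for $|q|\leq\kappa/2$ one has $|k-q|\geq |k|/2$ giving $O(\kappa^{-1})$, while for $|q|>\kappa/2$ the region near $k\approx q$ contributes $O(|q|^{-2}\log|q|)$, also $O(\kappa^{-1})$ after taking the supremum) shows $C_\kappa\to 0$ as $\kappa\to\infty$. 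Pairing with $\|(\mathcal{N}+\alpha^{-2})^{1/2}\Psi\|$ and using the symmetric AM-GM $2xy\leq x^2+y^2$ then produces equal coefficients:
\[
\bigl|\langle\Psi,\Phi_x^{>}\Psi\rangle\bigr| \leq C_\kappa\bigl(\langle\Psi,(-\Delta+\mathcal{N})\Psi\rangle + (1+\alpha_0^{-2})\|\Psi\|^2\bigr).
\]

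Combining the two parts, one first selects $\kappa$ large enough so that $C_\kappa<\varepsilon/4$, and then $\delta=\varepsilon/2$, to obtain the claimed bound with $C_\varepsilon$ depending only on $\varepsilon$ and $\alpha_0$. I expect the main obstacle to be precisely the verification that $C_\kappa\to 0$ uniformly in the electron momentum transfer $q$: the naive tail $\int_{|k|>\kappa}|k|^{-2}dk$ is divergent, so one has to exploit the factor $(|k-q|^2+1)^{-1}$ from the Laplacian $q$-uniformly, by separating the integration into the regions where $|k-q|$ is comparable to $|k|$ versus where $k$ lies near $q$, and checking that both contributions vanish as $\kappa\to\infty$.
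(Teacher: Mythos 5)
Your argument is correct, but it takes a different route from the one the paper relies on: the paper gives no proof of this lemma at all, deferring to \cite{frankschlein} (Lemma~7 and p.~7), whose treatment of the ultraviolet part rests on the Lieb--Yamazaki commutator identity $|k|^{-1}e^{ik\cdot x}a_k=|k|^{-3}\,k\cdot[-i\nabla_x,\,e^{ik\cdot x}a_k]$, which trades the non-square-integrable symbol $|k|^{-1}\chi_{|k|>\kappa}$ for the square-integrable $k|k|^{-3}\chi_{|k|>\kappa}$ (of $L^2$-norm $O(\kappa^{-1/2})$) at the cost of one electron derivative. You instead obtain the smallness of the UV part directly from the momentum-space convolution estimate underlying Lemma \ref{lemma: frankgang 3.1}, i.e.\ from
\begin{align}
\sup_{p\in\mathbb{R}^3}\int_{|k|>\kappa}\frac{d^3k}{|k|^2\big((p-k)^2+1\big)}=O(\kappa^{-1}),
\end{align}
which, combined with $\int d^3k\,a_k^*(-\Delta+1)a_k=(-\Delta+1)\mathcal{N}$, yields $\|\Phi_x^{+,>}\Psi\|\leq C\kappa^{-1/2}\|(-\Delta+1)^{1/2}\mathcal{N}^{1/2}\Psi\|$ and its $(\mathcal{N}+\alpha^{-2})^{-1/2}$ variant with the same decaying constant; the AM--GM pairing and the IR bound via $\|\,|k|^{-1}\chi_{|k|<\kappa}\|_2^2=4\pi\kappa$ then close the argument exactly as you describe. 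This is a perfectly valid and rather more self-contained proof (it avoids the integration by parts in $x$ and the attendant bookkeeping of where the gradient lands), whereas the Lieb--Yamazaki route is the one that generalizes to settings where one wants operator rather than form bounds. Two small inaccuracies that do not affect the conclusion: the near-diagonal contribution for $|q|>\kappa/2$ is $O(|q|^{-1})$ rather than $O(|q|^{-2}\log|q|)$, since $\int_{|u|<R}(u^2+1)^{-1}d^3u\sim 4\pi R$ grows linearly in three dimensions; and one must also account for the far region $|k-q|\geq|q|/2$, which contributes another $O(|q|^{-1})+O(\kappa^{-1})$ --- both are still $O(\kappa^{-1})$ after taking the supremum, so your choice of $\kappa=\kappa(\varepsilon)$ goes through.
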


The proof is given in \cite{frankschlein} (see Lemma 7 and p.7) and relies on arguments of Lieb and Yamazaki \cite{liebyamazaki}; see \cite[p.12]{liebthomas} for a concise explanation.

%%%%%%%%%%%%%%%%%%%%%%%%%%%%%%%%%%%%%%%%%%%%%%%%%%%%%%%%%%%%%%%%%%%%%%%%%%%%%%%%%%%%%%%%%%%%%%%%

\section*{Acknowledgments}

N.\,L.\  and R.\,S.\ gratefully acknowledge financial support by the European Research Council (ERC) under the European Union's Horizon 2020 research and innovation programme (grant agreement No 694227). B.\,S.\ acknowledges support from the Swiss National Science Foundation (grant 200020\_172623) and from the NCCR SwissMAP. N.\,L.\ would like to 
thank Andreas Deuchert and David Mitrouskas for interesting discussions. B.\,S. \ and R.\,S.\ 
would like to thank Rupert Frank for stimulating discussions about the time-evolution of a polaron.

%%%%%%%%%%%%%%%%%%%%%%%%%%%%%%%%%%%%%%%%%%%%%%%%%%%%%%%%%%%%%%%%%%%%%%%%%%%%%%%%%%%%%%%%%%%%%%%%

{}

\vspace{0.5cm}

\noindent
(Nikolai Leopold) Institute of Science and Technology Austria (IST Austria)\\ Am Campus 1, 3400 Klosterneuburg, Austria\\ 
current address: University of Basel, Department of Mathematics and Computer Science \\
Spiegelgasse 1, 4051 Basel, Switzerland \\
E-mail address: \texttt{nikolai.leopold@unibas.ch} \\

\noindent
(Simone Rademacher) Institute of Mathematics, University of Zurich \\
Winterthurerstrasse 190, 8057 Zurich, Switzerland \\
current address: Institute of Science and Technology Austria (IST Austria)\\ Am Campus 1, 3400 Klosterneuburg, Austria\\ 
E-mail address: \texttt{simone.rademacher@ist.ac.at} \\

\noindent
(Benjamin Schlein) Institute of Mathematics, University of Zurich \\
Winterthurerstrasse 190, 8057 Zurich, Switzerland \\
E-mail address: \texttt{benjamin.schlein@math.uzh.ch} \\

\noindent
(Robert Seiringer) Institute of Science and Technology Austria (IST Austria)\\ Am Campus 1, 3400 Klosterneuburg, Austria\\ 
E-mail address: \texttt{robert.seiringer@ist.ac.at}

\end{document}